\newtheorem{example}{Example}
\newtheorem{theorem}{Theorem}
\newtheorem{lemma}{Lemma}
\newtheorem{definition}{Definition}
  \providecommand\BibTeX{{%
    \normalfont B\kern-0.5em{\scshape i\kern-0.25em b}\kern-0.8em\TeX}}}
\newcommand{\fnaive}{\widetilde{f}_1}
\newcommand{\foneround}{\widetilde{f}_2}
\newcommand{\errnaive}{O(\frac{ n_1^2 e^{4 \varepsilon}}{(1 + e^{\varepsilon})^4} )}
\newcommand{\erroneround}{O(\frac{ n_1 e^{2 \varepsilon}}{(1 - e^{\varepsilon})^4} )}
\newcommand{\fds}{f^*}
\newcommand{\pqx}{\mathcal{C}_2(u,w)}
\newcommand{\vq}{$u$\xspace}
\newcommand{\vx}{$w$\xspace}
\newcommand{\fq}{\widetilde{f}_u}
\newcommand{\fx}{\widetilde{f}_w}
\newcommand{\epldp}{$\varepsilon$-edge LDP\xspace}
\newcommand{\myspace}{\vspace{-0.2cm}}
\newcommand{\figsizeone}{0.23}
\newcommand{\figsizetwo}{0.46}
\newcommand{\cdp}{\texttt{CentralDP}\xspace}
\newcommand{\naive}{\texttt{Naive}\xspace}
\newcommand{\bs}{\texttt{OneR}\xspace}
\newcommand{\advss}{\texttt{MultiR-SS}\xspace} 
\newcommand{\advds}{\texttt{MultiR-DS}\xspace} 
\newcommand{\advdsbasic}{\texttt{MultiR-DS-Basic}\xspace} 
\newcommand{\advdeg}{\texttt{MultiR-DS*}\xspace} 
\newcommand{\advsslong}{Multiple-round Single Source\xspace} 
\newcommand{\advdslong}{Multiple-round Double Source\xspace} 
\newcommand{\figuresixa}{{Fig.~6(a)}\xspace}
\begin{document}

\title{Common Neighborhood Estimation over Bipartite Graphs under Local Differential Privacy}

\author{Yizhang He}
\affiliation{%
  \institution{University of New South Wales}
  \country{}
  }
\email{yizhang.he@unsw.edu.au}

\author{Kai Wang}
\affiliation{
  \institution{Antai College of Economics and Management, Shanghai Jiao Tong University}
  \country{}
  }
\email{w.kai@sjtu.edu.cn}

\author{Wenjie Zhang}
\affiliation{%
  \institution{University of New South Wales}
  \country{}
  }
\email{wenjie.zhang@unsw.edu.au}

\author{Xuemin Lin}
\affiliation{
  \institution{Antai College of Economics and Management, Shanghai Jiao Tong University}
  \country{}
  }
\email{xuemin.lin@sjtu.edu.cn}

\author{Ying Zhang}
\affiliation{%
  \institution{University of Technology Sydney}
  \country{}
  }
\email{ying.zhang@uts.edu.au}

\pagestyle{plain}

\begin{abstract}
\label{abs}
Bipartite graphs, formed by two vertex layers, arise as a natural fit for modeling the relationships between two groups of entities. In bipartite graphs, common neighborhood computation between two vertices on the same vertex layer is a basic operator, which is easily solvable in general settings. However, it inevitably involves releasing the neighborhood information of vertices, posing a significant privacy risk for users in real-world applications. To protect edge privacy in bipartite graphs, in this paper, we study the problem of estimating the number of common neighbors of two vertices on the same layer under edge local differential privacy (edge LDP). The problem is challenging in the context of edge LDP since each vertex on the opposite layer of the query vertices can potentially be a common neighbor. To obtain efficient and accurate estimates, we propose a multiple-round framework that significantly reduces the candidate pool of common neighbors and enables the query vertices to construct unbiased estimators locally. Furthermore, we improve data utility by incorporating the estimators built from the neighbors of both query vertices and devise privacy budget allocation optimizations. These improve the estimator's robustness and consistency, particularly against query vertices with imbalanced degrees. Extensive experiments on 15 datasets validate the effectiveness and efficiency of our proposed techniques.
\end{abstract}

\maketitle



\section{Introduction}
Bipartite graphs have been widely used to represent connections between two sets of entities. 
Examples of real-world bipartite graphs include user-item networks in E-commerce \cite{wang2006unifying, li2020hierarchical}, people-location networks in contact tracing \cite{chen2021efficiently}, and user-page networks in social analysis \cite{o2010essentials}. 
\begin{figure}[thb]
\centering  
\includegraphics[width=0.45\textwidth]{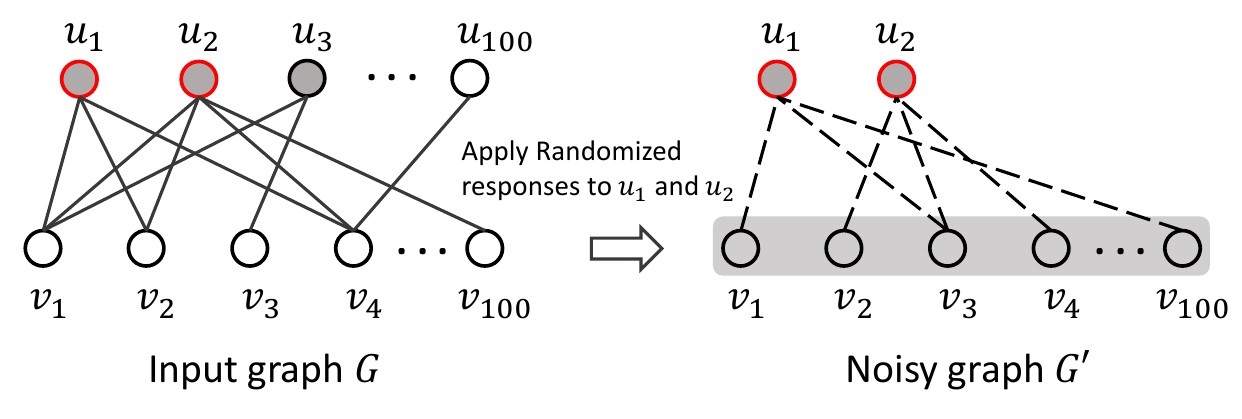}
\myspace
\caption{
A bipartite graph and its corresponding noisy graph by applying randomized responses to $u_1$ and $u_2$. 
}
\myspace
\myspace
\label{fig:motivation}
\end{figure}
{
In bipartite graphs, finding the common neighbors of two vertices is a basic operation in many tasks. For example, the similarity between two vertices can be computed using Jaccard similarity, which is the ratio of the number of their common neighbors to their combined unique neighbors \cite{leicht2006vertex, tsourakakis2014toward, yang2022efficient}. 
Additionally, common neighbor counts can help prune unpromising vertices in $(p,q)$-biclique counting \cite{yang2021p, ye2023efficient}. 
Other tasks that benefit from counting common neighbors in bipartite graphs include anomaly detection \cite{sun2005neighborhood}, bipartite graph projection \cite{stankova2021node, zhang2023bipartite}, bipartite clustering coefficient computation \cite{aksoy2017measuring, huang2010link}, community search \cite{dong2021butterfly, abidi2022searching, DBLP:journals/pvldb/WangWLZZ24}, 
and wedge-based motif counting \cite{xu2022efficient, wang2023efficient}. }

Although computing common neighbors is straightforward in the conventional setting, it inevitably involves releasing the neighborhood information of the vertices, posing a significant privacy risk for users in real-world applications. For instance, in user-item networks, disclosing identical items in the shopping carts of two users in online shopping platforms (e.g., eBay and Amazon) significantly compromises users' privacy. 
Hence, it is crucial to estimate the common neighborhood in a privacy-preserving manner, which remains an unresolved research gap.

In the literature, {\em differential privacy} (DP) \cite{dwork2014algorithmic} has become the gold standard for privacy-preserving computation, which provides a mathematical framework to quantify permissible privacy loss. 
Among the various DP models, {\em edge local differential privacy} (edge LDP) \cite{qin_generating_2017, zhang2018two, ye2020lf} has been widely adopted to protect the user's private connections. 
Edge LDP is a robust privacy protection protocol that requires each vertex to perturb its local data (e.g., degrees and neighbors) before transmitting it to the data curator. 

In this paper, we study the problem of privacy-preserving common neighborhood estimation over bipartite graphs. Specifically, given a bipartite graph $G$ and two query vertices \vq and \vx, we aim to estimate the number of common neighbors of \vq and \vx in $G$ on the same vertex layer with edge LDP. 
A random algorithm satisfies edge LDP when the probabilities of observing its output from any two vertices, whose neighbor sets differ by one vertex, are indistinguishable within a factor of $e^{\varepsilon}$. This ensures deniability for the existence of the edge $(u,v)$.
In this context, $\varepsilon$ represents the privacy budget, determining the acceptable level of privacy loss. Clients can adjust this parameter to tradeoff between privacy and data utility. 
{
In addition, common neighbor counting under edge LDP is the first step in addressing other problems under edge LDP, such as vertex similarity computation \cite{leicht2006vertex} and $(p,q)$-biclique counting \cite{yang2023p}. 
}

\noindent
{\bf Challenges.} 
In this paper, we aim to design accurate and efficient common neighbor estimation algorithms on bipartite graphs with edge LDP. We face the following three main challenges. 

\noindent 
{\em $\bullet$ Challenge 1:} 
In the literature, Randomized Response \cite{warner1965randomized} is widely employed to achieve edge LDP, flipping the entries of the adjacency matrix between ``0'' and ``1'' to construct a noisy graph. However, counting common neighbors on the noisy graph results in severe overcounting and bias, because the noisy graph is generally much denser. For instance, we examine the performance of this naive approach (\naive) across 1000 runs for a pair of query vertices on the dataset \texttt{rmwiki} in Fig..~\ref{fig:distribution}. 
The blue distribution, representing the estimates of \naive, deviates significantly to the right from the true count (indicated by the black dashed line). This substantial shift highlights the difficulty in accurately estimating the number of common neighbors using the noisy graph.


\noindent 
{\em $\bullet$ Challenge 2:} 
Due to the constraints of edge LDP, we have to start with all vertices on the opposite layer of the query vertices as a candidate pool to estimate the common neighbors. 
This involves potentially $O(n)$ independent random variables, leading to a substantial margin of error. 
Reducing this candidate pool and developing an unbiased estimator that relies on fewer random variables for enhanced data utility presents a challenging task. 

\noindent 
{\em $\bullet$ Challenge 3:} 
When the vertices and the data curator can interact for multiple rounds, the privacy budget must be divided among each round. However, allocating more budget to one round reduces the budget available for others. 
In addition, different pairs of query vertices likely need to be handled differently. 
Thus, finding the optimal allocation of privacy budgets to different rounds based on the query vertices requires special attention.

\noindent
{\bf Our approaches.} 
To address Challenge 1, we propose a one-round algorithm \bs that obtains unbiased estimates of common neighbors by leveraging the probability at which the entries in the adjacency matrix are flipped to compensate for over-counting. 
Specifically, for each query vertex, \bs applies randomized responses to both query vertices to generate noisy edges. 
For every vertex $v$ on the opposite layer of the query vertices, we estimate whether $v$ is a common neighbor of $u$ and $w$. 
Then \bs aggregates these estimates to obtain an unbiased estimate. 
While \bs addresses the challenge of the dense noisy graph and generates unbiased estimates, it relies on a large pool of candidates, compromising the utility of the data. 
As shown in Fig.~\ref{fig:distribution}, the yellow distribution representing the estimates of \bs appears symmetrical around the true common neighbor count with fat tails on both sides. 
This implies that the estimates are unbiased but have high variance.

To address Challenge 2, we propose a multiple round framework, allowing the query vertices to download the noisy edges from the previous round and reduce the candidate pool to their neighbors. 
Under this framework, we devise a single-source algorithm \advss, which returns an unbiased estimator for the number of common neighbors by leveraging the neighborhood of $u$. 
Specifically, in the first round, \advss applies randomized responses to vertex $w$ to generate noisy edges. 
In the second round, the vertex $u$ retrieves these noisy edges from the data curator and then constructs a single-source estimator using its neighbors in the original graph and the neighbors of $w$ in the noisy graph. 
To comply with edge LDP, the Laplace mechanism \cite{dwork2014algorithmic} is used to add noise to the single-source estimator before it is released to the data curator. 
This noise is scaled with the global sensitivity, defined as the maximum difference observed in the single-source estimator between two vertices whose neighbors differ by one edge. 
\advss achieves significantly better data utility compared to \bs by employing a multiple-round framework that reduces the candidate pool of common neighbors to the neighborhoods of the query vertices. 
As shown in Fig.~\ref{fig:distribution}, the green distribution represents the estimates of \advss, which preserves the unbiasedness and is more concentrated around the true value compared to \bs. 

\begin{figure}[thb]
\centering  
\includegraphics[width=0.38\textwidth]{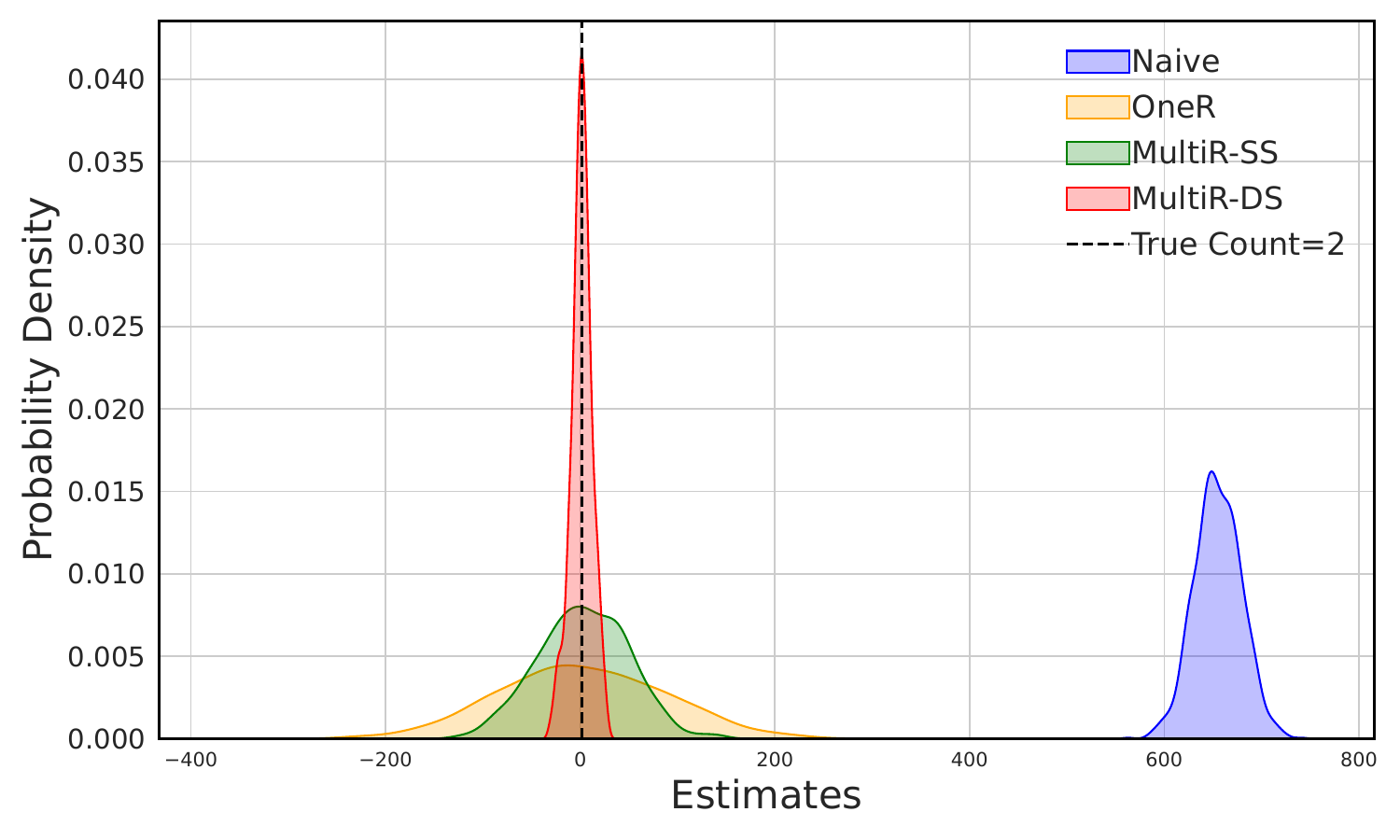}
\myspace
\caption{The estimate distribution on \texttt{rmwiki} when $\varepsilon=1$.
}
\myspace
\myspace
\label{fig:distribution}
\end{figure}

To tackle Challenge 3, we propose a double-source algorithm \advds under the multiple-round framework, which integrates two single-source estimators via a weighted average. 
In addition, we propose novel privacy budget allocation optimizations that allow \advds to dynamically adjust its privacy budgets for different rounds and the contribution of each single-source estimator for minimized L2 loss. 
Specifically, when the incoming query vertices have large degrees, \advds tends to devote more privacy budget to noisy graph construction. 
When the query vertices have very imbalanced degrees, \advds will favor the single-source estimator associated with the low-degree vertex, which depends on fewer random variables and induces less variance. 
In doing so, \advds further reduces L2 loss compared to \advss and is more robust to query vertices with high degrees or unbalanced degrees. 
In Fig.~\ref{fig:distribution}, the red distribution depicts the estimates of \advds. 
Here the degrees of the query vertices are highly imbalanced ($556$ and $2$). 
In this case, \advds yields unbiased and more concentrated estimates compared to \advss. 

\noindent
{\bf Contributions.} Here we summarize our principal contributions. 

\noindent 
{\em $\bullet$} To the best of our knowledge, we are the first to study accurate and efficient common neighborhood estimation on bipartite graphs under edge LDP.

\noindent 
{\em $\bullet$} 
To address the over-counting issue with the \naive algorithm, we propose a one-round algorithm \bs to return unbiased estimates, which leverages the probabilistic nature of the noisy graph to compensate for over-counting. 

\noindent 
{\em $\bullet$} 
We propose a multiple-round framework and devise a single-source algorithm (\advss), 
which allows one query vertex to download the noisy edges from the other query vertex and construct an unbiased estimator locally. 
This drastically reduces L2 loss because the search scope for the common neighbors is reduced to the neighborhood of one query vertex.

\noindent 
{\em $\bullet$} 
Under the multiple-round framework, we propose a double-source algorithm (\advds) that constructs two unbiased estimators for each query vertex and combines them through a weighted average. \advds further reduces L2 loss by dynamically adjusting the allocation of the privacy budget and the contribution of two estimators based on the incoming query vertices.

\noindent 
{\em $\bullet$}
We conduct extensive experiments on $15$ real-world datasets to evaluate the proposed algorithms. The multiple-round algorithms \advss and \advds produce significantly smaller mean relative errors than \naive and \bs across all datasets. 
\advds is especially robust to query vertices with imbalanced degrees. 

\section{Preliminary}
\label{preliminary}
\begin{table}[htb]
\centering
\caption{Summary of Notations}
\myspace
\scalebox{0.9}{
    \begin{tabular}{c|c}
    \noalign{\hrule height 1pt}
    \cellcolor{gray!25} Notation & \cellcolor{gray!25} Definition \\ 
    \noalign{\hrule height 0.6pt}
    $G$ & a bipartite graph \\
    $\mathcal{A} $ & the adjacency matrix\\ 
    $\mathcal{A}_u$ & the neighbor list of the vertex $u$ \\  
    $N(u,G)$ & the neighbors of $u$ in $G$ \\
    $deg(u,G)$ & the degree of $u$ in $G$ \\
    $\varepsilon$ & a privacy budget \\ 
    $G'$ & a noisy bipartite graph\\
    $\pqx$ & the number of common neighbors between $u$ and $w$\\
    $\Delta g$ & the global sensitivity of a function $g$\\
    $\fq$ & the single-source estimator based on $N(u, G)$\\
    $\fx$ & the single-source estimator based on $N(w, G)$\\
    $\fds$ & the double source estimator based on both $\fq$ and $\fx$\\
    \noalign{\hrule height 1pt}
    \end{tabular}
}
\label{tab:notation}
\myspace
\end{table}

\subsection{Problem definition}

We consider an unweighted bipartite graph $G(V=(U,L),E)$. $V=U\cup L$ denotes the set of vertices, where $U$ and $L$ represent the upper and lower layer, respectively. The vertices in $U$ and $L$ are called the upper vertices and the lower vertices. $E \subseteq U \times L$ denotes the set of edges. We use $n_1 = |U(G)|$ and $n_2 = |L(G)|$ to denote the number of upper and lower vertices, respectively, and $m$ = $|E|$ to represent the number of edges in $G$. The adjacency matrix $\mathcal{A}$ for $G$ is of dimensions $n \times n$, where $\mathcal{A}[u,v]=1$ if there exists an edge between the vertices $u$ and $v$ and $0$ otherwise. The $u$-th row of $\mathcal{A}$ (including both ``1'' and ``0'') is the {\em neighbor list} of $u$, denoted by $\mathcal{A}_u$. In addition, the set of neighbors of a vertex $u$ in $G$ is denoted by $N(u,G)$, and its degree is denoted by $deg(u,G)$ = $|N(u,G)|$. We use $d_{max}(U)$ and $d_{max}(L)$ to represent the maximum degree among the upper vertices and lower vertices, respectively. 
\begin{definition}
\label{def:pqx}
{\bf Common neighbors.} 
Let \vq and \vx be two vertices on the same layer of a bipartite graph $G$. 
The common neighbors of \vq and \vx are the vertices adjacent to both \vq and \vx in $G$, i.e., $N(u, G) \cap N(w, G)$. Here $N(u, G)$ represents the set of neighbors of vertex $u$ in graph $G$. 
We use $\pqx$ to denote the number of common neighbors of \vq and \vx, 
i.e., $\pqx = |N(u, G) \cap N(w, G)|$. 
\end{definition}

Classic differential privacy (DP) operates under a central model, where a central data curator manages a dataset $D$ \cite{dwork2014algorithmic}. $D$ and $D'$ are neighboring datasets if they differ by one data record. DP ensures that query outputs on these neighboring datasets are hard to distinguish. 
When extending DP under the central model to protect edge privacy in graphs, neighboring datasets refer to two graphs that vary by a single edge. 
However, the assumption with the central model that the data curator with access to the entire graph can be trusted can be impractical in real-world scenarios. 
Thus, we adopt the widely-used $\varepsilon$-edge local differential privacy (edge LDP), which enables each vertex to locally perturb its data before transmission to the data curator \cite{qin2017generating, zhang2018two, ye2020lf}. Under edge LDP, the neighboring datasets are two neighbor lists differing by one bit. 

\begin{definition} 
\label{def:ldp}
{\bf $\varepsilon$-edge local differential privacy.} 
Let $G$ be a bipartite graph and $\varepsilon>0$. 
For each vertex $u \in V(G)$, let $R_u$ with domain $\{0, 1\}^n$ be a randomized algorithm of vertex $u$. 
$R_u$ provides $\varepsilon$-edge LDP if for any two neighbor lists $\mathcal{A}_u$, $\mathcal{A}_u'$ that differ in one bit and any $S \subseteq Range(Ri)$: 
$$ Pr(R_u(\mathcal{A}_u) \in S  ) \leq e^{\varepsilon}  Pr(R_{u}(\mathcal{A}_u') \in S  ) $$
Here $\varepsilon$ is called the privacy budget. 
\end{definition}

\noindent
{\bf Problem Statement.}
Given a bipartite graph $G$, a privacy budget $\varepsilon$, and a pair of vertices \vq and \vx on the same layer of $G$, the {\em common neighborhood estimation} problem aims to estimate $\pqx$ while satisfying \epldp. 

{\color{black}Without loss of generality, we assume that $u$ and $w$ are in the lower layer of $G$.} In this paper, we use the {\em expected L2 loss} to evaluate the quality of estimates for the number of common neighbors. 
\begin{definition}
\label{def:l2}
{\bf Expected L2 loss.} 
Given two vertices $u$ and $w$ on a bipartite graph $G$, and an estimate $f$ for the number of common neighbors between $u$ and $w$, the expected L2 loss of $f$ is the expected squared error between $\pqx$ and $f$, i.e., $$l2( \pqx, f ) = \mathbb{E}(( \pqx - f  )^2)$$
\end{definition}

\subsection{Warm up}
\epldp ensures that if two vertices have neighbor lists differing by just one bit, they cannot be reliably distinguished based on the outputs of the randomized algorithms. 
In this part, we introduce the most common methods for providing edge LDP, which are randomized responses and the Laplace mechanism. 
Then, we present a naive solution to the common neighborhood estimation problem.

\begin{figure}[thb]
\includegraphics[width=0.40\textwidth]{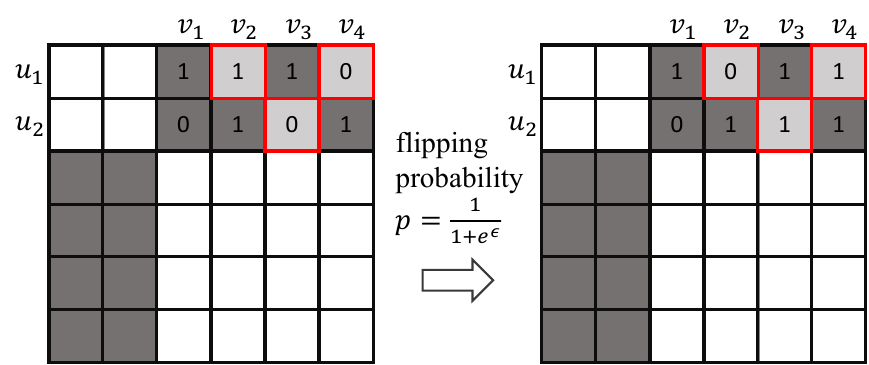}
\myspace
\caption{randomized responses on a bipartite graph. 
}
\label{fig:matrix}
\myspace
\end{figure}

\noindent
{\bf Warner's randomized response. }
One effective method for achieving \epldp is through randomized responses (RR), initially introduced as a survey technique to allow confidential answers to sensitive inquiries such as criminal or sexual activities \cite{warner1965randomized}. 
In essence, participants are asked to answer the questions honestly with probability. 
This concept has been adapted for graph applications to ensure edge local differential privacy \cite{qin_generating_2017, imola2021locally}. 
Specifically, each entry $x \in {0,1}$ of a neighbor list is perturbed with a probability of $\frac{1}{1+e^{\varepsilon}}$, where $\varepsilon$ denotes the privacy budget. 
$$
RR(x) = \begin{cases} 
1-x & \text{with probability } \frac{1}{1+e^{\varepsilon}} \\
x & \text{with probability } \frac{e^{\varepsilon}}{1+e^{\varepsilon}}
\end{cases}
$$
Current state-of-the-art methods on general graphs perturb each entry in the lower triangle of the adjacency matrix \cite{imola2021locally,imola2022communication}. 
However, to ensure that the resulting noisy graph $G'$ is bipartite, we only perturb entries in $\mathcal{A}$ that represent potential edges in a bipartite graph. 
Without loss of generality, we assume that the upper vertices have smaller IDs than the lower vertices. 
In this way, the adjacency matrix $\mathcal{A}$ can be divided into 4 blocks, where the ones on the diagonal are empty because the vertices of the same layer cannot be connected. 
In this paper, we only apply randomized responses to the neighbor lists of the query vertices \vq and \vx. 
When $u, w \in U(G)$, we flip the entries in $\mathcal{A}_u \cap L(G)$ and $\mathcal{A}_w \cap L(G)$. 
When $u, w \in L(G)$, we flip the entries in $\mathcal{A}_u \cap U(G)$ and $\mathcal{A}_w \cap U(G)$. 
In doing so, we avoid generating noisy edges not allowed in bipartite graphs. 
We denote the noisy graph with respect to a privacy budget $\varepsilon$ by $G'_{\varepsilon}$.
\begin{example}
We illustrate applying randomized responses to a bipartite graph with two upper vertices and four lower vertices in Fig.~\ref{fig:matrix}. 
The left shows the original adjacency matrix and the right side shows the matrix after applying randomized responses. 
Rows and columns are ordered with upper vertices preceding lower vertices. 
White squares indicate zeros, as edges within the same layer are not allowed in bipartite graphs. 
To find common neighbors of the upper vertices $u$ and $w$, we apply randomized responses to their neighbor lists, which affect the upper right block with eight grey squares. 
With the privacy budget \(\epsilon\), each grey block is flipped with probability \(\frac{1}{1 + e^{\epsilon}}\). Light grey squares outlined in red represent flipped entries.
\end{example}

\noindent
{\bf Calibrating noise with global sensitivity. }
To achieve \epldp, it is necessary to add noise to any data transmitted from a vertex to the data curator. The Laplace mechanism is used for this purpose, which calibrates the amount of noise with the {\em global sensitivity} of the transmitted data. 

\begin{definition}
\label{def:gs}
{\bf Global sensitivity.} 
Consider a bipartite graph $G$. 
Let $\mathcal{A}_u$ be the neighbor list of vertex $u$. 
Let $\mathcal{A}_u'$ be a neighbor list that differs from $\mathcal{A}_u$ in at most one entry. 
The global sensitivity of a function $f: \mathcal{A}_u \rightarrow \mathbb{R}$ is:
$$
\Delta f = max_{\mathcal{A}_u , \mathcal{A}_u'} | f(\mathcal{A}_u) - f(\mathcal{A}_u') |
$$
\end{definition}
\begin{definition}
\label{def:lap}
{\bf The Laplace Mechanism.} 
Given a privacy budget $\varepsilon$ and any function $f$, the Laplace mechanism is defined as:
$$
\tilde{f} = f + \text{Lap}\left(\frac{\Delta f}{\varepsilon}\right)
$$
Here $\tilde{f}$ is the noisy version of $f$ and $Lap(\cdot)$ is the probability density function of the Laplace distribution. 
\end{definition}
By applying the Laplace mechanism, we allow the vertices to send local graph statistics with calibrated noise to the data curator while satisfying \epldp. 

{\color{black}
Edge LDP can also be obtained via composition (i.e., combine multiple edge LDP algorithms). 
For instance, {\em sequential composition} \cite{jiang2021applications, qin_generating_2017} enables the sequential application of multiple edge LDP algorithms ($M_i$), each consuming some privacy budgets ($\varepsilon_i$), and ensures that the overall process satisfies $\sum_{i} \varepsilon_i$-edge LDP. 
{\em Parallel composition} states that if disjoint subsets of the graph are processed by different edge LDP algorithms ($M_i$) with privacy budget $\varepsilon_i$, then the overall mechanism running these algorithms in parallel satisfies $\max_i(\varepsilon_i)$-edge LDP \cite{yang2023local}.
Furthermore, edge LDP is immune to {\em post-processing} \cite{yang2023local, imola2021locally}, allowing the data curator to apply any post-processing to the graph statistics received from the vertices without compromising the privacy guarantees. }

\begin{algorithm}[tbh]
    \small
	\caption{\naive}
	\label{algo:naive}
	\LinesNumbered
	\KwIn{$G$: a bipartite graph; 
            $\varepsilon$: a privacy budget; 
            \vq, \vx: two query vertices from the same layer} 
	\KwOut{$\fnaive$: the naive estimator of $\pqx$}
         \tcp{\textbf{vertex side:}} 
\ForEach{$i \in \{u, w\}$}{
    \ForEach{$j \in$ the opposite layer from $u$ and $w$}{
    perturb 
        $\mathcal{A}'[i,j]  \gets
        \begin{cases}
            1- \mathcal{A}[i,j] , &  \text{w.p. }  \frac{1}{1+ e^{\varepsilon}} \\
            \mathcal{A}[i,j] & \text{w.p. }  \frac{e^{\varepsilon}}{1+ e^{\varepsilon}}
        \end{cases}
        $
    } 
}
send noisy edges to the data curator;\\
$G_{\varepsilon}' \gets $ the noisy graph constructed from $\mathcal{A}'[i,j]$;\\
        \tcp{\textbf{curator side:}} 
        $\fnaive \gets |N(u, G_{\varepsilon}') \cap N(w, G_{\varepsilon}')|$;\\
        \textbf{return} $\fnaive$;\\
\end{algorithm}

\noindent
{\bf A naive approach.} 
Given that randomized responses preserve \epldp, a naive approach is to count the number of common neighbors of \vq and \vx on the noisy graph constructed by randomized responses, as outlined in Algorithm \ref{algo:naive}. 
Note that for our problem, we only need to apply randomized responses to \vq and \vx. 
Specifically, given a privacy budget $\varepsilon>0$, \naive flips the entries $A[i,j]$ with a probability $\frac{1}{1+e^{\varepsilon}}$, where $i \in \{u, w\}$ (Lines 1-4). 
The data curator collects the noisy edges from \vq and \vx and constructs a noisy graph $G_{\varepsilon}'$ (Line 5). 
\textcolor{black}{In this way, we do not need to analyze the global sensitivity for \naive because it does not involve the Laplace mechanism and only relies on randomized responses to provide edge LDP. }
Then, the naive estimator $\fnaive$ is calculated taking the intersection of the neighbors of \vq and \vx in $G_{\varepsilon}'$, i.e., $\fnaive = |N(u, G_{\varepsilon}') \cap N(w, G_{\varepsilon}')|$ (Line 6). 
However, since there are generally more ``0''s than ``1''s in the neighbor lists of \vq and \vx, applying randomized responses usually results in a much denser noisy graph $G_{\varepsilon}'$, which results in severe overcounting $\pqx$. 

\noindent
{\bf Theoretical analysis for \naive.} 
{\color{black}
Without loss of generality, we assume that the query vertices $u$ and $w$ are from the lower layer $L(G)$ when analyzing the time costs, communication costs, and expected L2 losses. 
The time costs are divided between the vertex side and the curator side.
On the vertex side, the time costs incurred by randomized responses is $O(n_1)$, where $n_1$ is the number of vertices in $U(G)$. 
On the curator side, the dominating cost is incurred by intersecting $N(u, G_{\varepsilon}')$ and $N(w, G_{\varepsilon}')$, which takes $O(min(deg(u,  G_{\varepsilon}'), deg(w,  G_{\varepsilon}')))$. Thus, the overall time complexity is $O(n_1)$. 
The communication costs are incurred only during randomized responses, where vertices $u$ and $w$ send noisy edges to the data curator. 
For vertex $u$, the expected number of noisy edges is $d_u \times (1-p) + (n_1 - d_u) \times p$, where $p = \frac{1}{1+e^{\varepsilon}}$. Similarly, for vertex $w$, the expected number of noisy edges is $d_w \times (1-p) + (n_1 - d_w) \times p$. 
Thus, the overall communication cost is $O\left(\frac{e^{\varepsilon}-1}{e^{\varepsilon}+1}(d_u + d_w) + \frac{2n_1}{1 + e^{\varepsilon}}\right)$.} 
In the following, we analyze the expected L2 loss of the estimator $\fnaive$ returned by \naive. 

\begin{theorem} 
\label{thm:f1}
{\color{black}
Given a bipartite graph $G$, a privacy budget $\varepsilon$, and a pair of query vertices \vq and \vx in $L(G)$, the expected L2-loss of the estimator for $\fnaive$ in Algorithm \ref{algo:naive} is $O(\frac{n_1^2}{(1+e^{\varepsilon})^4 }) $. 
Here, $n_1$ represents the number of vertices in $U(G)$. 
}

\end{theorem}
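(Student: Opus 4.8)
The plan is to bound the expected L2 loss through the standard bias--variance decomposition. Writing $C = \pqx$ for the deterministic true count and abbreviating $p = \frac{1}{1+e^{\varepsilon}}$, $d_u = deg(u,G)$, $d_w = deg(w,G)$, Definition~\ref{def:l2} gives
\[
l2(C,\fnaive) = \mathbb{E}\big((C-\fnaive)^2\big) = \big(\mathbb{E}[\fnaive]-C\big)^2 + \mathrm{Var}(\fnaive),
\]
so it is enough to bound the squared bias and the variance separately. Since $u,w \in L(G)$, the only candidate common neighbors are the $n_1$ vertices of $U(G)$; for $j \in U(G)$ let $a'_j = \mathcal{A}'[u,j]$ and $b'_j = \mathcal{A}'[w,j]$ be the randomized-response outputs, so that $\fnaive = \sum_{j \in U(G)} a'_j b'_j$. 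The key structural fact is that randomized response perturbs every entry independently, so the variables $\{a'_j,b'_j\}_{j}$ are mutually independent; this makes each $\mathbb{E}[a'_j b'_j]$ factorize and makes $\mathrm{Var}(\fnaive)$ additive.

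For the squared bias I would partition $U(G)$ into four classes according to whether $j$ is a true common neighbor, a neighbor of $u$ only, a neighbor of $w$ only, or a neighbor of neither, of sizes $C$, $d_u-C$, $d_w-C$, and $n_1 - d_u - d_w + C$. Using $\mathbb{E}[a'_j b'_j] = \mathbb{E}[a'_j]\,\mathbb{E}[b'_j]$, which equals $(1-p)^2$, $p(1-p)$, $p(1-p)$, and $p^2$ on the four classes, summation yields a closed form for $\mathbb{E}[\fnaive]$ whose dominant term is $n_1 p^2$: each of the $\approx n_1$ empty cells is independently flipped into a spurious common neighbor with probability $p^2$, which is exactly the over-counting exhibited by \naive. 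Hence the bias is $\Theta(n_1 p^2)$ and the squared bias is $\Theta\big(n_1^2 p^4\big) = \Theta\big(\frac{n_1^2}{(1+e^{\varepsilon})^4}\big)$.

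For the variance, independence gives $\mathrm{Var}(\fnaive) = \sum_{j} \mathrm{Var}(a'_j b'_j)$, and since each $a'_j b'_j$ is a Bernoulli variable its variance is at most its mean; summing the per-class means bounds the variance by $O(n_1 p^2) = O\big(\frac{n_1}{(1+e^{\varepsilon})^2}\big)$. This is of lower order than the squared bias whenever $n_1 p^2 \gtrsim 1$ --- in particular for fixed $\varepsilon$ as $n_1$ grows --- so the two contributions combine to $l2(C,\fnaive) = O\big(\frac{n_1^2}{(1+e^{\varepsilon})^4}\big)$, as claimed.

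The hard part will not be the two sums themselves but controlling the subdominant, degree-dependent pieces of the bias, namely the $-4Cp(1-p)$ and $(d_u+d_w)p(1-2p)$ terms that accompany the leading $n_1 p^2$. When the bias is squared these generate cross terms of order $n_1 p^2(d_u+d_w)p$ and $\big((d_u+d_w)p\big)^2$, and one must argue they remain within $O(n_1^2 p^4)$. This holds cleanly in the sparse, bounded-degree regime $d_u, d_w = o(n_1)$, where the $n_1$ empty cells genuinely dominate the candidate pool; I would therefore make this regime explicit (equivalently, treat $\varepsilon$ as a constant so that every factor of $p$ is absorbed into the hidden constant) and isolate the $n_1 p^2$ over-counting as the sole source of the leading $\frac{n_1^2}{(1+e^{\varepsilon})^4}$ behavior.
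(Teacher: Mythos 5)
Your proposal is correct in the regime you flag, but it follows a genuinely different route from the paper's. The paper never decomposes into bias and variance: it notes that since $\fnaive \geq 0$, one has $l2(\fnaive,\pqx) = \mathbb{E}(\fnaive^2) - 2\pqx\,\mathbb{E}(\fnaive) + \pqx^2 = O(\mathbb{E}(\fnaive^2))$, writes $\fnaive = \sum_{v\in U(G)} \mathcal{A}'[u,v]\,\mathcal{A}'[v,w]$, and bounds the second moment term by term: the $O(n_1^2)$ diagonal and cross products of noisy Bernoulli entries each have expectation at most $(1-p)^4$, giving $O(n_1^2(1-p)^4)$. That argument is shorter and avoids your four-class case analysis, but it is cruder: what it actually establishes is $O\bigl(n_1^2 e^{4\varepsilon}/(1+e^{\varepsilon})^4\bigr)$ --- the bound the paper itself records in its summary table --- which is weaker than the theorem's displayed bound by a factor of $e^{4\varepsilon}$. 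Your route, by computing $\mathbb{E}(\fnaive)$ exactly, pins the error on the $\approx n_1 p^2$ spurious common neighbors and, in the sparse regime, gives a matching $\Theta(n_1^2 p^4)$, which is sharper information than the paper's proof yields.

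Your closing concern about the degree-dependent terms is not a technicality; it marks exactly where the literal statement fails. With absolute constants, $O\bigl(n_1^2/(1+e^{\varepsilon})^4\bigr) = O(n_1^2 p^4)$ is false for dense inputs: take $\pqx = 0$ and $d_u = d_w = n_1/2$; then the bias is $n_1 p(1-p)$, so the squared bias is $n_1^2 p^2(1-p)^2 = e^{2\varepsilon}\, n_1^2 p^4$. One must therefore either restrict to your sparse/constant-$\varepsilon$ regime, as you propose, or settle for the $(1-p)^4$ bound, as the paper's proof actually does. In other words, the mismatch you sensed is between the theorem's statement and the paper's own proof, not a defect of your argument --- just make the regime assumption explicit (or carry the $e^{4\varepsilon}$ factor) and your proof is complete.
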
 
\begin{proof}
{\color{black}
    The naive estimator $\fnaive = |N(u, G_{\varepsilon}')  \cap N(w, G_{\varepsilon}')|$. 
    Let $p = \frac{1}{1 + e^\varepsilon}$ be the flipping probability during the randomized responses. 
    Note that each entry $\mathcal{A}'[i,j]$ on the adjacency matrix of the noisy graph follows a Bernoulli distribution with a parameter $p$ (if $\mathcal{A}[i,j]=0$) or $1-p$ (if $\mathcal{A}[i,j]=1$). 
    Since $\fnaive\geq 0$, we have the following inequality:
        \begin{align*}
        & l2(\fnaive, \pqx)  
        =  \mathbb{E}( \fnaive^2) -2 \pqx  \mathbb{E}(\fnaive) + \pqx^2
        = O(\mathbb{E}( \fnaive^2)) 
        \end{align*}
        
Since $\fnaive = \sum_{v \in U(G) } \mathcal{A}'[u,v] \mathcal{A}'[v,w]$, we have: 
\begin{align*}
& \mathbb{E}( \fnaive^2) 
= \sum_{v \in U(G)} \mathbb{E}((\mathcal{A}'[u,v] \mathcal{A}'[v,w])^2) \\
& + 2 \sum_{v_i < v_j \in U(G)} \mathbb{E}( \mathcal{A}'[u,v_i] \mathcal{A}'[v_i,w]\mathcal{A}'[u,v_j] \mathcal{A}'[v_j,w]) \\ 
& \leq O({n_1 \choose 2 } (1-p)^4    ) = O(n_1^2 (1-p)^4 )= \errnaive
\end{align*}
The last step is due to $p = \frac{1}{1 + e^{\varepsilon}}$. 
}
\end{proof}

\begin{theorem}
{\color{black}
    The \naive algorithm satisfies \epldp. }
\end{theorem}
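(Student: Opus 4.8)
The plan is to verify Definition~\ref{def:ldp} directly for the randomizer that each query vertex applies, and then to account for the two query vertices and the curator-side computation via composition and post-processing. First I would isolate the core step: show that applying randomized response independently to a single neighbor list $\mathcal{A}_u$ constitutes an $\varepsilon$-edge LDP randomizer $R_u$. Fix two neighbor lists $\mathcal{A}_u$ and $\mathcal{A}_u'$ that differ in exactly one entry, say position $j^*$, and fix any output $y \in \{0,1\}^{n}$. Because \naive perturbs every entry independently, the output probability factorizes as $Pr(R_u(\mathcal{A}_u) = y) = \prod_j Pr(RR(\mathcal{A}_u[j]) = y_j)$, and likewise for $\mathcal{A}_u'$. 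Every factor with $j \neq j^*$ is identical across the two products because $\mathcal{A}_u[j] = \mathcal{A}_u'[j]$, so the whole ratio collapses to the single factor at $j^*$.

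Next I would bound that surviving factor. For one bit, the only possible likelihood ratios are between the ``kept'' probability $\tfrac{e^{\varepsilon}}{1+e^{\varepsilon}}$ and the ``flipped'' probability $\tfrac{1}{1+e^{\varepsilon}}$, so
\[
\frac{Pr(RR(x) = y_{j^*})}{Pr(RR(x') = y_{j^*})} \le \frac{e^{\varepsilon}/(1+e^{\varepsilon})}{1/(1+e^{\varepsilon})} = e^{\varepsilon}
\]
for any $x \neq x'$ and any value of $y_{j^*}$. Multiplying back the (unit) ratios from the matching coordinates yields $Pr(R_u(\mathcal{A}_u) = y) \le e^{\varepsilon}\, Pr(R_u(\mathcal{A}_u') = y)$ for every $y$; summing over $y \in S$ gives the set version required by Definition~\ref{def:ldp}. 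Hence $R_u$ --- and by the identical argument $R_w$ --- is an $\varepsilon$-edge LDP mechanism.

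Finally I would assemble the full algorithm. The two query vertices perturb disjoint portions of the adjacency matrix (the rows $\mathcal{A}_u$ and $\mathcal{A}_w$ are separate neighbor lists), so by parallel composition the joint release of both noisy neighbor lists still satisfies $\varepsilon$-edge LDP rather than $2\varepsilon$. The curator-side step computes $\fnaive = |N(u, G_{\varepsilon}') \cap N(w, G_{\varepsilon}')|$ purely from the already-released noisy edges and accesses no private data, so it is post-processing, to which edge LDP is immune as noted earlier. Composing these observations gives that \naive satisfies \epldp.

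I expect the only real subtlety --- rather than a genuine obstacle --- to be the bookkeeping in the composition step: one must argue that combining the two $\varepsilon$-randomizers does not inflate the budget to $2\varepsilon$, which hinges on the edge LDP notion being per-vertex and the perturbed lists being disjoint, so that parallel rather than sequential composition applies.
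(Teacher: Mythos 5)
Your proof is correct and follows essentially the same structure as the paper's own proof: randomized response applied to the query vertices' neighbor lists provides $\varepsilon$-edge LDP, and the curator's computation of $|N(u, G_{\varepsilon}') \cap N(w, G_{\varepsilon}')|$ is post-processing, to which edge LDP is immune. The only difference is one of detail --- you verify the single-bit likelihood-ratio bound for randomized response from first principles and explicitly justify via parallel composition (using the disjointness of $\mathcal{A}_u$ and $\mathcal{A}_w$ for same-layer query vertices) that the budget does not inflate to $2\varepsilon$, whereas the paper delegates the privacy of randomized response to cited prior work and leaves the two-vertex bookkeeping implicit.
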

\begin{proof}
{\color{black}
    Since the randomized responses provide \epldp \cite{imola2021locally, imola2022differentially}, 
    Lines 1-4 of the algorithm satisfy \epldp. 
    In addition, edge LDP is immune to post-processing, which means that any analysis (Lines 5-6) conducted on the noisy graph preserves edge LDP. 
    Thus, the theorem holds. 
}
\end{proof} 

\section{A one-round approach}
\label{sec:bs}

To address the problem of overcounting with the \naive approach, we propose a one-round algorithm \bs, which exploits the probabilistic nature of the noisy graph to obtain unbiased estimates of $\pqx$. 
Specifically, \bs leverages the flipping probability during randomized responses to construct an unbiased estimator for each vertex on the opposite layer of the query vertices. 
\text{In doing so, we do not need to analyze the global sensitivity for \bs, as it only relies on randomized responses to ensure edge LDP and does not involve the Laplace mechanism.} 
Then, \bs aggregates these estimates to obtain unbiased counts of common neighbors. 
First, we investigate the unbiased estimator of $\mathcal{A}[i,j]$ for two vertices $i$ and $j$ in the bipartite graph. 


\subsection{An unbiased estimator for $\mathcal{A}[i,j]$}
Consider a bipartite graph $G$. Let $\varepsilon$ be the privacy budget. We use $G_{\varepsilon}'$ to represent the noisy graph from applying randomized responses to the edges in $G$. 
During randomized responses, each entry in the neighbor list is flipped with a probability $p  = \frac{1}{1+e^{\varepsilon}}$. 
Note that each entry $\mathcal{A}'[i,j]$ on the adjacency matrix of the noisy graph follows a Bernoulli distribution with a parameter $p$ (when $\mathcal{A}[i,j]=0$) or $1-p$ (when  $\mathcal{A}[i,j]=1$). 
Based on this, we have the following equations which link the expected value of $\mathcal{A}[i,j]'$ and $\mathcal{A}[i,j]$: 
$$
\mathbb{E}(\mathcal{A}'[i,j] ) =
\begin{cases}
    p, & \text{if } \mathcal{A}[i,j] = 0 \\
    1-p & \text{if } \mathcal{A}[i,j]  = 1
\end{cases}
$$
This can be rearranged into one single equation: 
$
\mathbb{E}(\mathcal{A}'[i,j] ) =  \mathcal{A}[i,j] + p (1- 2 \mathcal{A}[i,j])$. 
Solving this equation for $\mathcal{A}[i,j]$ leads to: 
$$
E(\frac{\mathcal{A}'[i,j]  - p }{1-2p}) = \mathcal{A}[i,j]
$$
Let $\phi(i,j) = \frac{\mathcal{A}'[i,j]  - p }{1-2p}$. It immediately follows that $\phi(i,j)$ is an unbiased estimator of $\mathcal{A}[i,j]$. We can also analyze the variance of $\phi(i,j)$: 
\begin{align}
    \mathrm{Var}(\phi(i,j)) =  \mathrm{Var}(\frac{\mathcal{A}'[i,j]  - p }{1-2p}) 
    = \frac{p(1-p)}{(1-2p)^2}   
    \label{eq:phi.var}
\end{align}
The last step is because $\mathcal{A}'[i,j]$ is a Bernoulli variable with a probability $p$ or $1-p$, which leads to a variance of $p(1-p)$. 
\begin{algorithm}[tbh]
    \small
	\caption{\bs}
	\label{algo:bs}
	\LinesNumbered
	\KwIn{$G$: a bipartite graph; 
            $\varepsilon$: a privacy budget; 
            \vq, \vx: two query vertices from the same layer} 
	\KwOut{$\foneround$: the one-round unbiased estimator for $\pqx$}
        
        $\foneround \gets 0$; $p \gets \frac{1}{1 + e ^{\varepsilon}}$;\\
        \ForEach{{\color{black}$v \in V(G'_{\varepsilon})$ on the opposite layer as \vq and \vx }}{
            $\foneround \gets \foneround + (\mathcal{A}'[u, v] - p) (\mathcal{A}'[v, w] - p) / (1-2p)^2  $;\\
       }
        \textbf{return} $\foneround$;\\
\end{algorithm}

\subsection{An unbiased estimator for $\pqx$}
In this part, we derive an unbiased estimator for $\pqx$ based on the noisy graph from randomized responses. 
{\color{black}Without loss of generality, let's assume that both \vq and \vx are in $L(G)$. 
By definition, $\pqx = | N(u, G) \cap N(w, G)  | = 
\sum_{v \in U(G) } \mathcal{A}[u,v] \mathcal{A}[v,w]$. 
This implies that we need to estimate $\mathcal{A}[u,v] \mathcal{A}[v,w]$ for all $v \in U(G)$ in an unbiased way. 
Since $\mathcal{A}'[u,v]$ and $\mathcal{A}'[v,w]$ are independent of each other, we have $\mathbb{E}(\phi(u, v) \phi(v, w)) = \mathcal{A}[u,v] \mathcal{A}[v,w]$, which leads to the following estimator for $\pqx$. }

\begin{theorem}
\label{thm:f2}
Consider a bipartite graph $G$, a privacy budget $\varepsilon$, and two vertices \vq and \vx in $L(G)$. 
Let $G_{\varepsilon}'$ be the noisy graph after applying the randomized response to $G$ w.r.t. $\varepsilon$. 
Let $p = \frac{1}{1+e^{\varepsilon}}$ be the flipping probability. 
We have
$
\mathbb{E}( \foneround(u, w) ) = \pqx
$
where  
\begin{align}
    \foneround(u, w) = \sum_{ v \in U(G)}  \frac{(\mathcal{A}'[u,v] - p) (\mathcal{A}'[v,w] - p )}{(1-2p)^2}
    \label{eq:fbs}
\end{align}
is an unbiased estimate for $\pqx$. 
\end{theorem}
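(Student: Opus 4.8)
The plan is to recognize that the estimator $\foneround(u,w)$ is exactly a sum, over the candidate layer $U(G)$, of products of the single-entry estimators $\phi$ that were just analyzed. Concretely, I would first rewrite
\begin{equation*}
\foneround(u,w) = \sum_{v \in U(G)} \phi(u,v)\,\phi(v,w),
\end{equation*}
which holds because $\phi(i,j) = \frac{\mathcal{A}'[i,j]-p}{1-2p}$ and the two factors in each summand share the common denominator $(1-2p)^2$. This reduction lets me lean entirely on the per-entry unbiasedness $\mathbb{E}(\phi(i,j)) = \mathcal{A}[i,j]$ established immediately above, rather than re-deriving anything from the underlying Bernoulli structure.

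Next I would apply linearity of expectation to push the expectation inside the sum, giving $\mathbb{E}(\foneround(u,w)) = \sum_{v \in U(G)} \mathbb{E}(\phi(u,v)\phi(v,w))$. The one genuine step is to factor the expectation of each product into a product of expectations. This is justified by observing that $\phi(u,v)$ and $\phi(v,w)$ are functions of the distinct matrix entries $\mathcal{A}'[u,v]$ and $\mathcal{A}'[v,w]$, which are perturbed by independent randomized responses; hence $\phi(u,v)$ and $\phi(v,w)$ are independent random variables and $\mathbb{E}(\phi(u,v)\phi(v,w)) = \mathbb{E}(\phi(u,v))\,\mathbb{E}(\phi(v,w))$.

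Finally, substituting the single-entry unbiasedness twice yields $\mathbb{E}(\phi(u,v)\phi(v,w)) = \mathcal{A}[u,v]\,\mathcal{A}[v,w]$ for every $v$, so that
\begin{equation*}
\mathbb{E}(\foneround(u,w)) = \sum_{v \in U(G)} \mathcal{A}[u,v]\,\mathcal{A}[v,w] = \pqx,
\end{equation*}
where the last equality is the definition of the common-neighbor count for vertices in $L(G)$. This closes the argument.

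I expect the only real obstacle, and the one point worth stating explicitly rather than glossing over, to be the factorization $\mathbb{E}(\phi(u,v)\phi(v,w)) = \mathbb{E}(\phi(u,v))\,\mathbb{E}(\phi(v,w))$. Everything else is linearity of expectation plus a direct appeal to the already-proved fact $\mathbb{E}(\phi(i,j)) = \mathcal{A}[i,j]$. The independence is not purely cosmetic: it rests on randomized response treating each neighbor-list bit as a separately and independently flipped coin, and on $u \neq w$ so that $(u,v)$ and $(v,w)$ are genuinely different entries of the matrix. I would flag both of these conditions so that the factorization is airtight.
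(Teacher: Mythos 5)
Your proposal is correct and follows essentially the same route as the paper: the paper likewise writes $\pqx = \sum_{v \in U(G)} \mathcal{A}[u,v]\,\mathcal{A}[v,w]$, uses the independence of the distinct noisy entries $\mathcal{A}'[u,v]$ and $\mathcal{A}'[v,w]$ to conclude $\mathbb{E}(\phi(u,v)\phi(v,w)) = \mathcal{A}[u,v]\,\mathcal{A}[v,w]$, and sums over $v \in U(G)$ by linearity of expectation. Your explicit flagging of the independence justification and the requirement $u \neq w$ only makes the same argument more airtight than the paper's terse version.
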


The proof of Theorem \ref{thm:f2} involves summing $\phi(u, v) \phi(v, w)$ over all $v \in U(G)$. 
In particular, when \vq and \vx belong to $U(G)$, similar results are derived with $v$ in $L(G)$. 
Based on Theorem \ref{thm:f2}, we design a one-round algorithm \bs, which estimates $\pqx$ based on the noisy graph generated from randomized responses, as outlined in Algorithm \ref{algo:bs}. In Lines 1-6, \bs constructs a noisy graph $G'_{\varepsilon}$ by applying randomized responses to \vq and \vx. 
Then, it computes $\foneround(u, w)$ by considering all vertices on the opposite layer (i.e., $U(G)$) as candidates for the common neighbors of \vq and \vx (Lines 7-8). 
In practice, to efficiently compute $\foneround(u, w)$ and avoid visiting all candidate vertices, the expression in Equation \ref{eq:fbs} can be expanded as such. 
{\color{black}
\begin{align*}
\foneround(u, w) & = N_1  \frac{{(1 - p)^2}}{{(1 - 2p)^2}} 
\hspace{-0.5mm} - (N_2 - N_1)  \frac{{(1 - p)  p}}{{(1 - 2p)^2}} 
\hspace{-0.5mm} + ( n_1-N_2 )  \frac{{p^2}}{{(1 - 2p)^2}}
\end{align*}
}
Here $N_1$ denotes the number of common neighbors of $u$ and $w$ in the noisy graph $G_{\varepsilon}'$. 
$N_2$ denotes size of the union of the neighbor sets of $u$ and $w$ in $G_{\varepsilon}'$. 
{\color{black}$n_1$ represents the number of vertices in $|U(G)|$ (i.e., the opposite layer of $u$ and $w$). }
In this way, we only need to compute the intersection and union of the neighbor sets of $u$ and $w$ in $G_{\varepsilon}'$ to efficiently obtain $\foneround(u, w)$.

\begin{example}
We illustrate the construction of the unbiased estimator $\foneround$ returned by the \bs algorithm using the bipartite graph shown in Fig.\ref{fig:motivation}, focusing on $u_1$ and $u_2$ as query vertices with three common neighbors. 
The outline of the query vertices is highlighted in red. 
On the right side of Fig.\ref{fig:motivation}, we present the noisy graph constructed by applying randomized responses to $u_1$ and $u_2$. Note that we only need to apply randomized responses to $u_1$ and $u_2$. The dashed lines in the graph represent the resulting noisy edges. 
The vertices shaded in grey depict the candidate pool for common neighbors between $u_1$ and $u_2$ in \bs, including all vertices on the opposite layer from $u_1$ and $u_2$. 
According to Equation \ref{eq:fbs}, $\foneround$ relies on $n_2 = |L(G)|=100$ random variables: $\mathcal{A}'[v_i, u_2]$, where $i \in [1, 100]$. 
\end{example}

\noindent
{\bf Theoretical analysis for \bs.} 
{\color{black}
Without loss of generality, we assume that the query vertices $u$ and $w$ are in $L(G)$ in the following analyses. 
On the vertex side, the time costs incurred by randomized responses is $O(n_1)$, where $n_1$ is the number of vertices in $U(G)$. 
On the curator side, the dominating cost is incurred by computing $\foneround(u, w)$ in Lines 7-8, which can be implemented in $O(deg(u, G'_{\varepsilon}) + deg(w, G'_{\varepsilon}))$ time by computing the intersection and the union of $N(u, G'_{\varepsilon})$ and $N(w, G'_{\varepsilon})$. The overall time complexity is $O(n_1)$. 
The communication costs of \bs are incurred only during randomized responses, similar to \naive. The overall communication cost is $O\left(\frac{e^{\varepsilon}-1}{e^{\varepsilon}+1}(d_u + d_w) + \frac{2n_1}{1 + e^{\varepsilon}}\right)$. }

In the following, we analyze the expected L2 loss of $\foneround$ returned by Algorithm \ref{algo:bs} in Theorem \ref{thm:f2.var}. 

\begin{theorem}
\label{thm:f2.var}
{\color{black}
The L2 loss of $\foneround$ in Theorem \ref{thm:f2} is $O(\frac{ n_1  e^{\varepsilon}}{(1 - e^{\varepsilon})^4} )$. Here, $n_1$ represents the number of vertices in $U(G)$ (i.e., the opposite layer to \vq and \vx). }
\end{theorem}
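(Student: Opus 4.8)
The plan is to lean on the unbiasedness already proved in Theorem~\ref{thm:f2}. Since $\mathbb{E}(\foneround(u,w)) = \pqx$, the bias vanishes and the expected L2 loss collapses to a pure variance, $l2(\pqx, \foneround) = \mathbb{E}((\foneround - \pqx)^2) = \mathrm{Var}(\foneround)$. So the whole task reduces to bounding $\mathrm{Var}(\foneround)$, and I never need to touch a bias term.

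The central step is a variance-of-a-sum argument. I would write the estimator as $\foneround(u,w) = \sum_{v \in U(G)} \phi(u,v)\phi(v,w)$, exactly as in Equation~\ref{eq:fbs}. The structural observation is that the summand indexed by $v$ depends only on the two perturbed entries $\mathcal{A}'[u,v]$ and $\mathcal{A}'[v,w]$, and randomized responses flip every matrix entry independently. Hence for $v \neq v'$ the two summands involve disjoint Bernoulli variables and are therefore independent. This independence is what lets me discard all covariance cross terms and write $\mathrm{Var}(\foneround) = \sum_{v \in U(G)} \mathrm{Var}(\phi(u,v)\phi(v,w))$.

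Next I would evaluate one term. Writing $a = \mathcal{A}[u,v]$ and $b = \mathcal{A}[v,w]$, both in $\{0,1\}$, and using the independence of $\phi(u,v)$ and $\phi(v,w)$ together with $\mathbb{E}(\phi)=\mathcal{A}$ and the variance $\sigma^2 := \frac{p(1-p)}{(1-2p)^2}$ from Equation~\ref{eq:phi.var}, I get $\mathbb{E}(\phi(u,v)^2) = \sigma^2 + a$ and $\mathbb{E}(\phi(v,w)^2) = \sigma^2 + b$ (using $a^2=a$, $b^2=b$). Then $\mathrm{Var}(\phi(u,v)\phi(v,w)) = (\sigma^2+a)(\sigma^2+b) - a^2b^2 = \sigma^4 + \sigma^2(a+b)$, where the last equality uses $a^2b^2 = ab$. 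Summing over $v \in U(G)$ gives $\mathrm{Var}(\foneround) = n_1\sigma^4 + \sigma^2\big(deg(u,G)+deg(w,G)\big)$.

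Finally I substitute $p = \tfrac{1}{1+e^{\varepsilon}}$, which yields $\sigma^2 = \tfrac{e^{\varepsilon}}{(e^{\varepsilon}-1)^2}$ and $\sigma^4 = \tfrac{e^{2\varepsilon}}{(e^{\varepsilon}-1)^4}$, so the leading term is $n_1\sigma^4 = \tfrac{n_1 e^{2\varepsilon}}{(e^{\varepsilon}-1)^4}$; since $deg(u,G)+deg(w,G) = O(n_1)$ and $\sigma^2 = O(\sigma^4)$ for bounded $\varepsilon$, the degree term is absorbed, giving $\mathrm{Var}(\foneround) = O\!\left(\tfrac{n_1 e^{2\varepsilon}}{(1-e^{\varepsilon})^4}\right)$ (noting $(1-e^{\varepsilon})^4 = (e^{\varepsilon}-1)^4$), which matches the \erroneround\ order. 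I expect the main obstacle to be the rigorous justification of the independence of the summands in the second step: it is what removes the $O(n_1^2)$ covariance contribution and keeps the variance linear in $n_1$, and it must be argued from the fact that distinct $v$ index disjoint perturbed entries. Everything afterward is a routine Bernoulli moment calculation.
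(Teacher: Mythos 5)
Your proof is correct and follows essentially the same route as the paper's: bias--variance decomposition of the L2 loss, termwise variance using independence of the summands across $v \in U(G)$, and the identical exact expression $n_1\sigma^4 + \sigma^2\bigl(deg(u,G)+deg(w,G)\bigr)$ before substituting $p = \tfrac{1}{1+e^{\varepsilon}}$. The only cosmetic difference is that the paper computes each term via the shifted variables $\eta = \mathcal{A}'[u,v]-p$, $\theta = \mathcal{A}'[v,w]-p$ and the variance-of-a-product formula, whereas you use second moments of $\phi$ directly; both arrive at the same bound $O\bigl(\tfrac{n_1 e^{2\varepsilon}}{(1-e^{\varepsilon})^4}\bigr)$, which is exactly what the paper's own proof concludes.
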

\begin{proof}

Since $\foneround$ is unbiased, based on the bias-variance decomposition theorem \cite{bouckaert2008practical}, the L2 loss of $f_1$ equals its variance. 
{\color{black}Assume \vq and $w \in L(G)$. }
\begin{align*}
    & l_2( \foneround,  \pqx ) = \mathrm{Var}( \foneround(u,w))  \\
    & = {\color{black}\frac{1}{(1-2p)^4} \sum_{   v \in U(G)}  \mathrm{Var}( (\mathcal{A}'[u,v] - p) (\mathcal{A}'[v,w] - p ) ) }\\
\end{align*}
{\color{black}Let \(\eta = \mathcal{A}'[u,v] - p\) and \(\theta = \mathcal{A}'[v,w] - p\). 
By construction, we know that \(\mathbb{E}(\eta) = 1 - 2p\) when \(v \in N(u, G)\) and \(\mathbb{E}(\eta) = 0\) otherwise. 
Similarly, \(\mathbb{E}(\theta) = 1 - 2p\) when \(v \in N(w, G)\) and \(\mathbb{E}(\theta) = 0\) otherwise. 
In addition, we have \(\mathrm{Var}(\eta) = \mathrm{Var}(\theta) = p(1 - p)\). 
This is because \(\eta\) and \(\theta\) are shifted Bernoulli variables with the same variance. }

\begin{align*}
& \mathrm{Var}( \foneround(u,w))  = \frac{1}{(1-2p)^4} \sum_{v\in U(G)}  \mathrm{Var}( \eta \theta ) \\
&= {\color{black}\frac{p^2(1-p)^2}{(1-2p)^4} |U(G)|
    + \frac{p(1-p)}{(1-2p)^2}(deg(u, G) + deg(w, G)) } \\
&= {\color{black} O( \frac{p^2(1-p)^2}{(1-2p)^4}  |U(G)|  ) = \erroneround } 
\end{align*}
The last step is due to $p = \frac{1}{1 + e^{\varepsilon}}$. 
\end{proof}

\begin{theorem}
{\color{black}
The \bs algorithm satisfies \epldp. 
}
\end{theorem}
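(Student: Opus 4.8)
The plan is to establish that \bs satisfies \epldp by the same two-part argument used for the \naive algorithm, since both rely exclusively on randomized responses and post-processing, without invoking the Laplace mechanism or any global-sensitivity analysis. First I would observe that \bs begins by constructing the noisy graph $G'_{\varepsilon}$ via randomized responses applied to the neighbor lists of the query vertices \vq and \vx. Concretely, each entry $\mathcal{A}'[i,j]$ with $i \in \{u, w\}$ is flipped independently with probability $p = \frac{1}{1+e^{\varepsilon}}$. I would cite the established result \cite{imola2021locally, imola2022differentially} that Warner's randomized response with this flipping probability satisfies \epldp for each perturbed neighbor list. This handles the data-release step.

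Next I would argue that the remaining computation in \bs, namely the aggregation in Lines 2--4 that forms $\foneround(u,w) = \sum_{v} (\mathcal{A}'[u,v]-p)(\mathcal{A}'[v,w]-p)/(1-2p)^2$, is a deterministic function of the already-perturbed noisy edges $\mathcal{A}'$ together with the public quantities $p$ and $n_1$. Since this aggregation reads only the released noisy graph $G'_{\varepsilon}$ and does not touch the true neighbor lists $\mathcal{A}_u$ or $\mathcal{A}_w$ again, it qualifies as post-processing. Invoking the post-processing immunity of edge LDP \cite{yang2023local, imola2021locally}, I would conclude that applying this function does not degrade the privacy guarantee.

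Combining the two observations, the randomized-response step provides \epldp and the subsequent aggregation is post-processing, so the overall \bs algorithm satisfies \epldp. The main subtlety — and the only point worth checking carefully — is confirming that the unbiased-estimator construction genuinely operates on $\mathcal{A}'$ rather than re-accessing the private data; once that is verified, the proof is essentially identical in structure to the privacy proof for \naive and requires no new machinery. I would therefore keep the proof short, mirroring the earlier theorem, rather than re-deriving the privacy properties of randomized response from scratch.
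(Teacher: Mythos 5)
Your proposal is correct and follows essentially the same argument as the paper: randomized responses on the neighbor lists of $u$ and $w$ provide \epldp, and the computation of $\foneround$ from the noisy graph is post-processing, which preserves the guarantee. The paper's own proof is exactly this two-step composition, citing the same results, so no further comparison is needed.
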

\begin{proof}
{\color{black}
Since the randomized responses provide \epldp\cite{imola2021locally, imola2022differentially}, Lines 1-4 of the algorithm satisfy \epldp. 
Additionally, edge LDP is immune to post-processing, meaning that any analysis conducted on the noisy graph (Lines 5-8) preserves edge LDP. Thus, the theorem holds.
}
\end{proof}

\section{Multiple-round approaches}
\label{sec:adv}


The \bs algorithm reflects our first attempt at obtaining an unbiased estimate of $\pqx$. 
{However, as analyzed in Section \ref{sec:bs}, the L2 loss of \bs still contains a factor of $n_1$}, because \bs inevitably considers all potential vertices on the opposite layer from \vq and \vx as candidates for the common neighbors. 
To further improve data utility, in this section, we employ the classic multiple-round framework in the literature of graph analysis under edge LDP \cite{imola2021locally}. 
In the first round, we utilize a part of the privacy budget to construct a noisy graph via randomized responses. 
Then, we allow both \vq and \vx to download the noisy edges from each other and combine their local neighbors with the noisy edges to compute unbiased estimates of $\pqx$ locally. 
In the end, we use the remaining privacy budget to apply the Laplace mechanism to these unbiased estimates to comply with edge LDP. 
By adopting this multiple-round framework, we propose a \advsslong algorithm (\advss) where we only rely on the local view of \vq to estimate $\pqx$. 
Then, we introduce the \advdslong algorithm (\advds), which leverages the local neighborhoods of both \vq and \vx to optimize privacy budget allocation and balance the contribution of query vertices, resulting in minimized L2 loss.


\begin{figure}[thb]
\centering  
\includegraphics[width=0.45\textwidth]{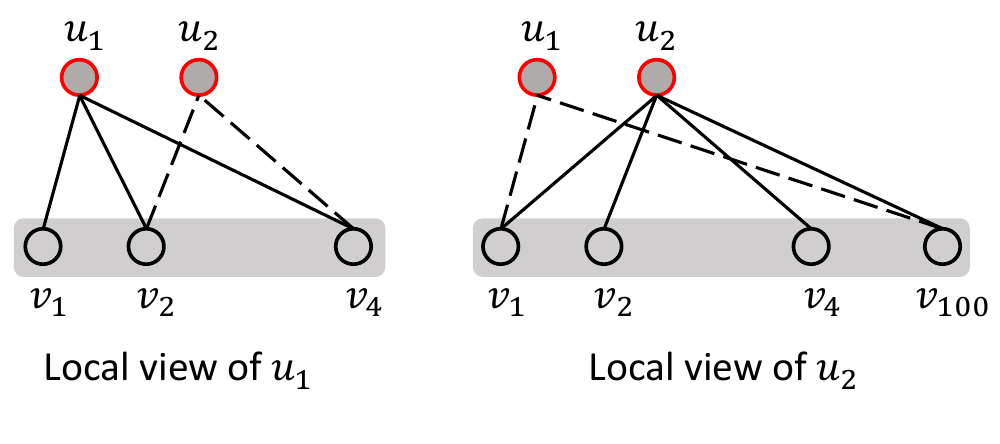}
\myspace
\caption{
The construction of $\fq$ and $\fx$ based on the local neighborhoods of $u_1$ and $u_2$ ($u = u_1$, $w = u_2$).
}
\myspace
\myspace
\label{fig:adv}
\end{figure}

\subsection{A single-source estimator for $\pqx$} 
In this part, we introduce a two-round algorithm for estimating $\pqx$. 
First, $\varepsilon_1$ is utilized to construct a noisy graph by applying randomized responses to \vq and \vx. 
Then, the data curator releases the noisy graph. In the second round, vertex \vq integrates its local neighbors with the noisy graph to derive an unbiased estimator for $\pqx$. Then, $\varepsilon_2$ is employed to apply the Laplace mechanism to add noise to this estimator. 

Now we assume that the noisy graph $G_{\varepsilon_1}'$ has already been constructed and discuss how to estimate $\pqx$ based on the local neighbors of \vq and the noisy neighbors of \vx. 
We start by noting that $\pqx$ can also be written as $\sum_{v \in N(u)} \mathcal{A}[v,w] $. 
Thus, when the neighbors of \vq are available, estimating $\pqx$ reduces to estimating $\mathcal{A}[v,w]$, which has already been addressed by $\phi(v, w) = \frac{\mathcal{A}'[v,w]-p}{1-2p}$ in Section \ref{sec:bs}. Here the flipping probability becomes $\frac{1}{1 + e^{\varepsilon_1}}$. 
Based on the above analysis, we derive the following unbiased estimate of $\pqx$ as 
\begin{align*}
   & f_u(u, w) = \sum_{ v \in N(u, G) } \phi(v, w)  = 
   \sum_{ v \in N(u, G) } \frac{\mathcal{A}'[v,w]-p}{1-2p} \\ 
   & = |N(u, G) \cap N(w, G_{\varepsilon_1}') | \frac{1-p}{1-2p} -|N(u, G) \setminus N(w,G_{\varepsilon_1}') |\frac{p}{1-2p} 
\end{align*}

At this point, $f_u(u, w)$ is computed locally based on the neighborhood of \vq. To release it under edge LDP, we analyze the global sensitivity of $f_u(u, w)$ and apply the Laplace mechanism. 

\noindent
{\bf Global Sensitivity Analysis.}
By Definition \ref{def:gs}, the global sensitivity of $f_u(u, w)$ is defined as $\Delta(f_u(u, w)) = \max_{u, u' \in V(G)} | f_u(u, w) - f_{u'}(u, w) |$, 
where $u'$ represents a hypothetical vertex differing from \vq in its neighbor list at one entry. 
It follows:
\begin{align*}
\Delta(f_u(u, w)) 
\leq \max_{v'} | \phi(v', w) | = \frac{1-p}{1-2p}
\end{align*}

The last step is because the absolute value of $\phi(v', w)$ is either $\frac{1-p}{1-2p}$ or $\frac{p}{1-2p}$. Since $p = \frac{1}{1+ e ^{\varepsilon}} < \frac{1}{2}$, $\frac{1-p}{1-2p}$ is always larger than $\frac{p}{1-2p}$. 
This bound suggests that we must add Laplacian noise scaled to $\frac{1-p}{1-2p}$ before sending the estimator to the data curator. 
In other words, the data curator receives the noisy version of $f_u(u, w)$ denoted by 
\begin{align}
\fq(u,w) = \sum_{v \in N(u, G)} \frac{\mathcal{A}'[v,w]-p}{1-2p} + \text{Lap}\left(\frac{1-p}{(1-2p) \varepsilon_2}\right)
\label{eq:fq}
\end{align}

\begin{lemma}
\label{thm:fq.biase}
$\fq(u,w)$ in Equation \ref{eq:fq} is an unbiased estimate of $\pqx$, i.e., 
$\mathbb{E}(\fq(u,w)) = \pqx$. 
\end{lemma}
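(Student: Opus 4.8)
The plan is to establish unbiasedness by combining three ingredients already available in the excerpt: the linearity of expectation, the unbiasedness of the single-entry estimator $\phi$, and the zero mean of the Laplace noise. The key structural observation is that $\fq(u,w)$ decomposes as a finite sum of terms $\phi(v,w)$ ranging over the \emph{true} neighbors $v \in N(u,G)$, plus an independent Laplace perturbation, so the expectation splits cleanly.

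First I would recall from Section \ref{sec:bs} that for any pair of vertices $i,j$ the estimator $\phi(i,j) = \frac{\mathcal{A}'[i,j]-p}{1-2p}$ satisfies $\mathbb{E}(\phi(i,j)) = \mathcal{A}[i,j]$, where the expectation is over the randomized responses with flipping probability $p = \frac{1}{1+e^{\varepsilon_1}}$ in this round. Since $N(u,G)$ is the deterministic neighbor set of \vq in the original graph and therefore carries no randomness, applying linearity of expectation to $f_u(u,w) = \sum_{v \in N(u,G)} \phi(v,w)$ yields
$$\mathbb{E}(f_u(u,w)) = \sum_{v \in N(u,G)} \mathbb{E}(\phi(v,w)) = \sum_{v \in N(u,G)} \mathcal{A}[v,w].$$

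Next I would identify this sum with $\pqx$. Each term $\mathcal{A}[v,w]$ equals $1$ exactly when $v \in N(w,G)$ and $0$ otherwise, so summing over $v \in N(u,G)$ counts precisely the vertices lying in both $N(u,G)$ and $N(w,G)$; that is, $\sum_{v \in N(u,G)} \mathcal{A}[v,w] = |N(u,G) \cap N(w,G)| = \pqx$ by Definition \ref{def:pqx}. Finally, because $\fq(u,w) = f_u(u,w) + \mathrm{Lap}\!\left(\frac{1-p}{(1-2p)\varepsilon_2}\right)$ and the Laplace distribution is symmetric about the origin (hence mean zero) and is sampled independently of the randomized responses, adding it leaves the expectation unchanged, giving $\mathbb{E}(\fq(u,w)) = \mathbb{E}(f_u(u,w)) = \pqx$.

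I expect there to be no genuinely hard step in this argument; it is essentially a linearity-of-expectation computation. The only points deserving care are that the summation index $N(u,G)$ refers to the true (noise-free) neighborhood of \vq, so that it is a fixed index set and no conditioning on the noisy graph is required, and that the Laplace noise is drawn independently of the noisy edges, which is guaranteed by the two-round design in which the noise is added only after the noisy graph has been fixed.
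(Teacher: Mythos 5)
Your proof is correct and follows essentially the same route as the paper's: both rely on the unbiasedness of $\phi(v,w)$, linearity of expectation over the fixed set $N(u,G)$, and the zero mean of the Laplace noise. Your version is merely more explicit about why $\sum_{v \in N(u,G)} \mathcal{A}[v,w] = \pqx$ and why the noise term's independence and symmetry make it vanish in expectation, which the paper states more tersely.
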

\begin{proof}
Since $\mathbb{E}(\frac{\mathcal{A}'[v,w]-p}{1-2p})  = \mathcal{A}[v,w]$, the expected value of the first term in $\fq$ is $\mathbb{E}( \sum_{v \in N(u, G)} \mathcal{A}[v,w]) = \pqx$. 
The second term represents the noise drawn from the Laplacian distribution with an expected value of zero. Thus, $\fq(u,w)$ is unbiased. 
\end{proof}

\begin{algorithm}[tbh]
    \small
	\caption{The \advss algorithm}
	\label{algo:adv1}
	\LinesNumbered
	\KwIn{$G$: a bipartite graph; 
            $\varepsilon$: a privacy budget; 
            \vq, \vx: two query vertices} 
         \KwOut{$\fq(u, w) $}
        split privacy budget $\varepsilon$ into $\varepsilon_1$ and $\varepsilon_2$ evenly;\\
        \tcp{\textbf{round 1:}}
        $p \gets \frac{1}{1+ e^{\varepsilon_1}}$;\\
\ForEach{$j \in$ the opposite layer from $u$ and $w$}{
perturb 
    $\mathcal{A}'[u,j]  \gets
    \begin{cases}
        1- \mathcal{A}[u,j] , &  \text{w.p. }  \frac{1}{1+ e^{\varepsilon}} \\
        \mathcal{A}[u,j] & \text{w.p. }  \frac{e^{\varepsilon}}{1+ e^{\varepsilon}}
    \end{cases}
    $
} 
send noisy edges to the data curator;\\
$G_{\varepsilon_1}' \gets $ the noisy graph constructed from $\mathcal{A}'[u,j]$;\\
     \tcp{\textbf{round 2:}}
            $S_1 \gets 0$; $S_2 \gets 0$;\\
            \ForEach{$v \in N(u, G)$ }{
                \If{ $(v, w) \in E(G_{\varepsilon_1}')$  }{
                    $S_1 \gets S_1 + 1$:\\
                }
                \Else{
                    $S_2 \gets S_2 + 1$:\\
                }
            }
        $\fq(u, w) \gets S_1 \times \frac{1-p}{1-2p} - S_2 \times \frac{p}{1-2p}$;\\
        $ \fq(u, w) \gets \fq(u, w) + \text{Lap}\left(\frac{1-p}{(1-2p) \varepsilon_2}\right)$;\\
	\textbf{return} $\fq(u, w) $;\\
\end{algorithm}

\noindent
{\bf The \advss algorithm.} 
In this part, we present the \advsslong algorithm (\advss) which involves two rounds of interaction between the vertices and the data curator and returns the unbiased estimator $\fq(u ,w)$ derived in Lemma \ref{thm:fq.biase}. The detailed steps are summarized in Algorithm \ref{algo:adv1}. 
Initially, \advss splits the privacy budget $\varepsilon$ into $\varepsilon_1$ and $\varepsilon_2$ evenly. 
In the first round, randomized responses are applied to both \vq and \vx to generate noisy edges, which are then transmitted to the data curator (Lines 3-6). Then, $G_{\varepsilon_1}'$ is constructed from these noisy edges. 
In the second round, \advss visits the neighbors of \vq on $G$ and counts how many are connected to \vx on $G_{\varepsilon_1}'$. 
Upon termination of the for-loop (Lines 9-13), $S_1$ represents the $|N(u, G) \cap N(w, G_{\varepsilon_1}') |$ and $S_2$ represents the $|N(u, G) \setminus N(w,G_{\varepsilon_1}') |$. 
Based on $S_1$ and $S_2$, \advss computes $\fq(u, w)$ and add Laplacian noise scaled to $\frac{1-p}{(1-2p) \varepsilon_2}$ (Lines 14, 15). 
Compared to \bs that considers all vertices on the opposite layer from \vq and \vx on the noisy graph, \advss limits the search scope for the common neighbors of \vq and \vx to the local neighbors of \vq, which results in a substantial reduction in L2 loss. 

\begin{example}
Consider the bipartite graph in Fig.~\ref{fig:motivation}. 
In Fig.\ref{fig:adv}, we illustrate the construction of the single-source estimators where $u = u_1$ and $w = u_2$. 
The outlines of $u_1$ and $w_2$ are highlighted in red. 
A privacy budget $\varepsilon_1$ is allocated to randomized responses for $u_1$ and $u2$. 
In Fig.~\ref{fig:adv}, each query vertex can download the noisy edges from the other query vertex and integrate them with its neighbors. 
In the local perspective of $u_1$, the solid lines represent edges between $u_1$ and its neighbors, while the dashed lines represent noisy edges from $u_2$. 
The vertices shaded in grey are candidates for common neighbors between $u_1$ and $u_2$ in \advss, which includes the neighbors of $u_1$ in the original graph. 
Note that this is much smaller than the candidate pool identified by \bs, which includes all vertices on the opposite layer from the query vertices. 
Based on the formula for $\fq$, it only relies on three Bernoulli variables: $\mathcal{A}'[v_1, u_2]$, $\mathcal{A}'[v_2, u_2]$, and $\mathcal{A}'[v_4, u_2]$. 
Similarly, we can derive $\fx$ based on four random variables: $\mathcal{A}'[v_1, u_2]$, $\mathcal{A}'[v_2, u_2]$, $\mathcal{A}'[v_4, u_2]$, and $\mathcal{A}'[v_{100}, u_2]$. 
The reliance on fewer random variables accounts for the smaller expected L2 loss of \advss compared to \bs.
\end{example}

\noindent
{\bf Theoretical analysis for \advss.} 
{
Without loss of generality, we assume that the query vertices $u$ and $w$ are in $L(G)$. 
First, we analyze the computational time complexity of Algorithm \ref{algo:adv1} (\advss). On the vertex side, the time costs incurred by randomized responses is $O(n_1)$, where $n_1 = |U(G)|$. On the curator side, visiting the neighbors of $u$ in $G$ to compute $\fq(u, w)$ takes $O(deg(u, G) $ time. The overall time complexity is $O(n_1)$. 
Then, we analyze the communications costs of \advss, which include (1) sending the noisy edges from $w$ and downloading them to vertex $u$ and (2) sending the single-source estimator $\fq(u, w)$ to the data curator. 
The dominating cost is incurred by the step (1). 
Note that the expected number of noisy edges from vertex $w$ is $d_w \times (1-p) + (n_1 - d_w) \times p$, where $p = \frac{1}{1 + e^{\varepsilon_1}}$. 
Thus, the overall communication cost is $O(\frac{e^{\varepsilon_1}-1}{e^{\varepsilon_1}+1}d_w + \frac{ n_1}{1 + e^{\varepsilon_1}} )$. 
In the following, we analyze the expected L2 loss of \advss. }

\begin{theorem}
\label{thm:fq}
The expected L2 loss of $\fq(u,w)$ in Equation \ref{eq:fq} is 
$  O(\frac{ e^{\varepsilon_1}}{(1 - e^{\varepsilon_1})^2} (d_u + \frac{ 2 e^{\varepsilon_1}}{\varepsilon_2^2} ) )$. Here $d_u$ represents the degree of \vq in $G$. 
\end{theorem}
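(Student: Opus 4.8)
The plan is to leverage the unbiasedness of $\fq(u,w)$ already established in Lemma~\ref{thm:fq.biase}. By the bias--variance decomposition, the expected L2 loss of an unbiased estimator equals its variance, so it suffices to compute $\mathrm{Var}(\fq(u,w))$. The key structural observation is that $\fq(u,w)$ in Equation~\ref{eq:fq} is a sum of two statistically independent quantities: the randomized-response estimator $\sum_{v \in N(u,G)} \phi(v,w)$ and the injected Laplace noise $\mathrm{Lap}(\frac{1-p}{(1-2p)\varepsilon_2})$. Because these two sources of randomness are independent, their variances add.

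First I would handle the randomized-response sum. The summands $\phi(v,w) = \frac{\mathcal{A}'[v,w]-p}{1-2p}$ are mutually independent across $v \in N(u,G)$, since randomized responses perturb each entry $\mathcal{A}'[v,w]$ independently. Each summand has variance $\frac{p(1-p)}{(1-2p)^2}$ by Equation~\ref{eq:phi.var}, so the variance of the sum is exactly $d_u \cdot \frac{p(1-p)}{(1-2p)^2}$, where $d_u = |N(u,G)|$ is the degree of $u$ in $G$. Note this holds regardless of which $v$ happen to be common neighbors, because a shifted Bernoulli variable has variance $p(1-p)$ whether its parameter is $p$ or $1-p$. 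Next I would handle the Laplace term: recalling that $\mathrm{Lap}(b)$ has variance $2b^2$, the noise contributes $\frac{2(1-p)^2}{(1-2p)^2 \varepsilon_2^2}$.

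It remains to substitute $p = \frac{1}{1+e^{\varepsilon_1}}$ and simplify. A short computation gives $\frac{p(1-p)}{(1-2p)^2} = \frac{e^{\varepsilon_1}}{(e^{\varepsilon_1}-1)^2}$ and $\frac{(1-p)^2}{(1-2p)^2} = \frac{e^{2\varepsilon_1}}{(e^{\varepsilon_1}-1)^2}$, after which I would factor out $\frac{e^{\varepsilon_1}}{(1-e^{\varepsilon_1})^2}$ from the two terms to reach the stated bound $O(\frac{e^{\varepsilon_1}}{(1-e^{\varepsilon_1})^2}(d_u + \frac{2e^{\varepsilon_1}}{\varepsilon_2^2}))$, using $(e^{\varepsilon_1}-1)^2 = (1-e^{\varepsilon_1})^2$.

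This is fundamentally a clean variance calculation rather than a deep argument, so I do not expect a genuine obstacle. The only point requiring care is the justification that the two variance contributions add: this rests on the independence between the randomized-response noise (used to build $G_{\varepsilon_1}'$ in round one) and the fresh Laplace noise drawn in round two. That independence is immediate from the construction, but worth stating explicitly so that the additive split of the variance is rigorous.
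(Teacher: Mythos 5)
Your proposal is correct and follows essentially the same route as the paper's own proof: unbiasedness plus bias--variance decomposition, additivity of the variances of the independent randomized-response sum and the Laplace noise, the per-term variance $\frac{p(1-p)}{(1-2p)^2}$ from Equation~\ref{eq:phi.var} summed over $d_u$ neighbors, the Laplace variance $\frac{2(1-p)^2}{(1-2p)^2\varepsilon_2^2}$, and finally the substitution $p=\frac{1}{1+e^{\varepsilon_1}}$. Your explicit remarks that the summand variance is the same whether the underlying entry is $0$ or $1$, and that the round-one and round-two noise sources are independent by construction, only make explicit what the paper leaves implicit.
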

\begin{proof}
Since $\fq(u,w)$ is an unbiased estimator, its L2 loss equals its variance. 
The variance of $\fq(u,w)$ consists of two parts: $f_u(u,w)$ and the Laplacian noise. 
First, it immediately follows that the variance from $f_u(u,w)$ is $\frac{p(1-p)}{(1-2p)^2} d_u $ based on Equation \ref{eq:phi.var}. This is because each $\phi(u, w)$ is independent of each other and there is no covariance involved. 
In addition, the variance from the Laplacian noise is: 
$$
\mathrm{Var}(Lap(\frac{\Delta(\fq(u, w))}{\varepsilon_2})) = 2 (\frac{1-p}{(1-2p)\varepsilon_2})^2
 = \frac{2(1-p)^2}{(1-2p)^2\varepsilon_2^2}
$$
Since $f_u(u,w)$ and the Laplacian noise are independent, the expected L2 loss of $\fq$ is
$\frac{p(1-p)}{(1-2p)^2} d_u  + \frac{2(1-p)^2}{(1-2p)^2\varepsilon_2^2}$. 
Substituting $p = \frac{1}{1+ e^{\varepsilon_1}}$ into the above expression completes the proof. 
\end{proof}
{Since the L2 loss of $\fq(u,w)$ is no longer dependent on $n_1$, the data utility of \advss is significantly improved compared to \bs.} 
In addition, we check whether Algorithm \ref{algo:adv1} satisfies the privacy requirements of \epldp in the following theorem. 
\begin{theorem} 
{Given a bipartite graph $G$ and a privacy budget $\varepsilon$, Algorithm \ref{algo:adv1} satisfies \epldp. }
\end{theorem}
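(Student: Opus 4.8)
The plan is to decompose the two-round mechanism into its two primitive releases, bound the privacy loss incurred by each query vertex separately, and then invoke sequential composition to combine the rounds. Since \advss splits the budget as $\varepsilon = \varepsilon_1 + \varepsilon_2$ (evenly, with $\varepsilon_1 = \varepsilon_2 = \varepsilon/2$), the target is to show that round~1 consumes at most $\varepsilon_1$ and round~2 consumes at most $\varepsilon_2$ of the per-vertex budget, so their composition stays within $\varepsilon$. The three tools I would rely on are all established earlier in the Warm up section: randomized responses provide \epldp, the Laplace mechanism calibrated to global sensitivity provides \epldp, and edge LDP is closed under both sequential composition and post-processing.

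First I would handle round~1 (Lines 3--6). Here randomized responses with flipping probability $p = \frac{1}{1+e^{\varepsilon_1}}$ are applied to the neighbor lists of \vq and \vx, which by the cited guarantee for RR satisfies $\varepsilon_1$-edge LDP for each of the two query vertices. Next I would handle round~2 (Lines 9--15). The crucial observation is that $f_u(u,w)$ reads the \emph{true} neighbor list of \vq but only the \emph{noisy} incidence $\mathcal{A}'[v,w]$ of \vx; because the latter is drawn from the already-published noisy graph $G'_{\varepsilon_1}$, its use is pure post-processing and consumes no further budget. The only fresh access to private data in round~2 is therefore to $\mathcal{A}_u$. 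Using the global sensitivity bound $\Delta(f_u(u,w)) \le \frac{1-p}{1-2p}$ derived above, the Laplace noise $\text{Lap}\!\left(\frac{1-p}{(1-2p)\varepsilon_2}\right)$ added in Line~15 is exactly $\text{Lap}(\Delta f_u / \varepsilon_2)$, so by Definition~\ref{def:lap} the release $\fq(u,w)$ satisfies $\varepsilon_2$-edge LDP with respect to $\mathcal{A}_u$.

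Finally I would combine the two rounds by taking the worst case over participating vertices. For vertex \vx, its edges are touched only through the round~1 randomized response, while the round~2 computation depends on \vx solely through the post-processed noisy graph; hence \vx incurs total loss $\varepsilon_1 \le \varepsilon$. For vertex \vq, its true neighbor list is accessed in both rounds---via RR in round~1 ($\varepsilon_1$) and via the Laplace mechanism in round~2 ($\varepsilon_2$)---so by sequential composition the total privacy loss for \vq is $\varepsilon_1 + \varepsilon_2 = \varepsilon$. This yields the claimed \epldp guarantee. I expect the main obstacle to be the bookkeeping in the combination step: one must argue carefully that the dependence of round~2 on $G'_{\varepsilon_1}$ is genuinely post-processing (so $\varepsilon_1$ is not paid twice), and that the sensitivity used to scale the Laplace noise matches the $\frac{1-p}{1-2p}$ bound with $p$ defined via $\varepsilon_1$ rather than the full budget $\varepsilon$.
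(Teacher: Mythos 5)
Your proof is correct and follows essentially the same route as the paper: both decompose the algorithm into the randomized-response round ($\varepsilon_1$-edge LDP) and the Laplace-mechanism release scaled to the global sensitivity $\frac{1-p}{1-2p}$ ($\varepsilon_2$-edge LDP), then invoke sequential composition to conclude $\varepsilon_1+\varepsilon_2=\varepsilon$. Your additional per-vertex bookkeeping (that \vx's data enters round~2 only via post-processing of the published noisy graph, so \vx pays only $\varepsilon_1$) is a slightly more careful rendering of what the paper leaves implicit, but it is the same argument, not a different one.
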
 
\begin{proof}
{
We use the {\em Sequential Composition} theorem \cite{jiang2021applications} to prove that Algorithm \ref{algo:adv1} satisfies \epldp. 
In the first round, generating the noisy edges via randomized responses satisfies $\varepsilon_1$-edge LDP (Lines 2-6). 
In the second round, Lines 7-13 are conducted locally by the vertex $u$. 
Then, the Laplace mechanism (Line 14) is applied w.r.t. a privacy budget of $\varepsilon_2$ to construct the unbiased estimator $\fq$, which satisfies $\varepsilon_2$-edge LDP. 
By the sequential composition property of edge LDP, Algorithm \ref{algo:adv1} satisfies \epldp ($\varepsilon = \varepsilon_1 + \varepsilon_2$). }
\end{proof}

\subsection{A double-source estimator for $\pqx$} 
\begin{figure}[tb]
\centering  
\includegraphics[width=0.23\textwidth]{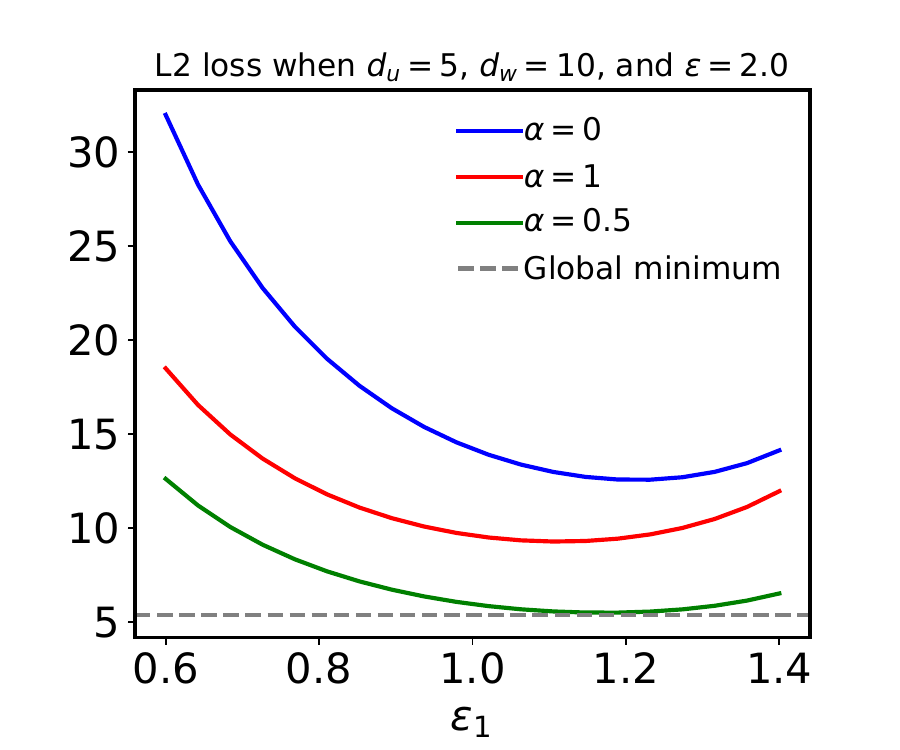}
\includegraphics[width=0.23\textwidth]{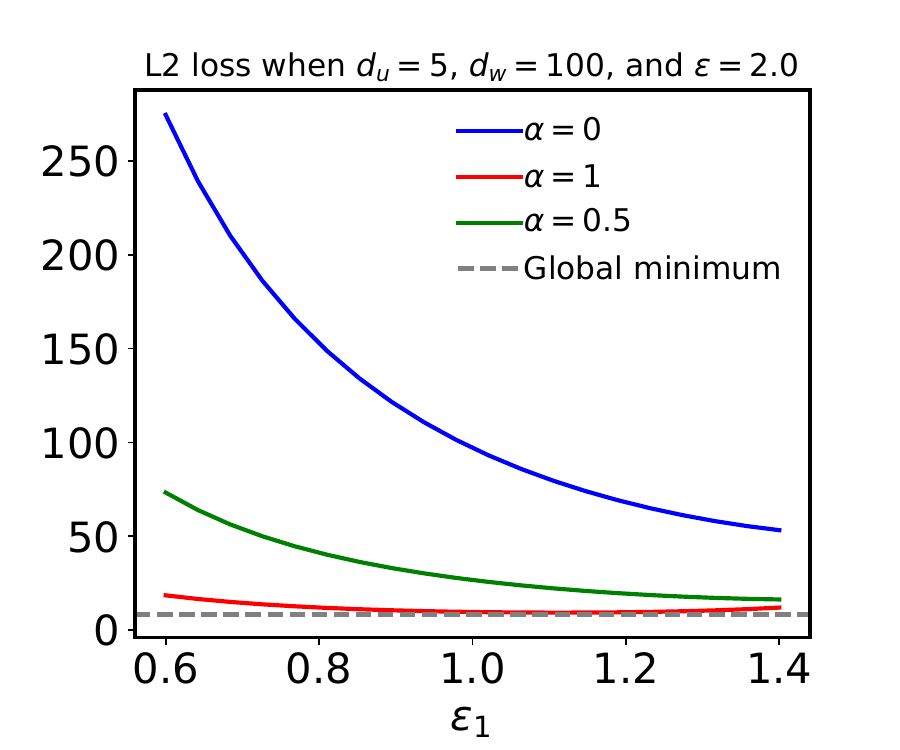}
\myspace
\myspace
\caption{Illustration of the L2 loss of $\fds$ when $\varepsilon = 2$.}
\label{fig:variancecompare}
\end{figure}


The single-source estimator $\fq(u,w)$ only involves the neighborhood of \vq. Similarly, we can develop another unbiased estimator, $\fx(u,w)= \sum_{ v \in N(w, G) } \phi(v, u)$, by applying the same process to the neighborhood of \vx. 
This raises a natural question: How can we integrate these estimators to further minimize L2 loss while maintaining unbiasedness? 
Examining the loss of L2 of $\fq(u,w)$ in Theorem \ref{thm:fq}, we can see that it consists of the first term representing the error incurred by randomized responses, and the second term representing the error incurred by Laplacian noise. 
On the one hand, if we only minimize the first term, we could always choose the estimator between $\fq$ and $\fx$ whose corresponding query vertex has a smaller degree. 
On the other hand, if we only focus on minimizing the second term, we could take an average of $\fq$ and $\fx$ and the Laplacian noise of the resulting estimator will be reduced by half. 
To balance both objectives, we propose a double-source estimator $f^*$ by taking a weighted average of $\fq$ and $\fx$, i.e., $f^* = \alpha \fq+ (1 - \alpha) \fx$ ($\alpha \in [0,1]$). 
Here $\alpha$ is the weighting parameter that adjusts the contribution of $\fq$ and $\fx$. 
By analyzing the L2 loss of $f^*$, we introduce the \advdslong algorithm (\advds), which enhances data utility by optimizing the allocation of privacy budget and balancing the contribution of $\fq$ and $\fx$. 

\noindent
{\bf Properties of the double-source estimator $\fds$.} 
Given that $\fds$ is a weighted average of $\fq$ and $\fx$, its unbiasedness directly stems from the principle of linearity in expected values, i.e., $\mathbb{E}(aX + bY) = a\mathbb{E}(X) + b\mathbb{E}(Y)$. 
Thus, based on the bias-variance decomposition, its L2 loss equals its variance, as analyzed in the following theorem. 
\begin{theorem}
\label{thm:balance}
The L2 loss of $f^* = \alpha \fq+ (1 - \alpha) \fx$ ($\alpha \in [0,1]$) is $\frac{ e^{\varepsilon_1}}{(1 - e^{\varepsilon_1})^2} \left( (\alpha^2 d_u+(1-\alpha)^2d_w)  + \frac{ 2(\alpha^2 + (1-\alpha)^2)e^{\varepsilon_1}}{\varepsilon_2^2} \right)
$.
Here $d_u$ and $d_w$ represent the degrees of \vq and \vx in $G$. 
\end{theorem}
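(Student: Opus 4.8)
The plan is to reduce the statement to a single variance computation. As noted just before the theorem, $\fds = \alpha \fq + (1-\alpha)\fx$ is unbiased by linearity of expectation together with Lemma \ref{thm:fq.biase} (and its mirror image for $\fx$), since $\mathbb{E}(\fds) = \alpha\,\mathbb{E}(\fq) + (1-\alpha)\,\mathbb{E}(\fx) = \alpha\,\pqx + (1-\alpha)\,\pqx = \pqx$. Hence the bias–variance decomposition lets me replace the L2 loss by $\mathrm{Var}(\fds)$, and the whole proof comes down to evaluating this variance cleanly.

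First I would expand the variance of the weighted sum,
\begin{align*}
\mathrm{Var}(\fds) = \alpha^2\,\mathrm{Var}(\fq) + (1-\alpha)^2\,\mathrm{Var}(\fx) + 2\alpha(1-\alpha)\,\mathrm{Cov}(\fq,\fx).
\end{align*}
The one genuine step — and the part I expect to be the main obstacle, in the sense that everything else is mechanical — is to show that $\mathrm{Cov}(\fq,\fx) = 0$. To do this I would track precisely which random variables each estimator depends on: $\fq$ is a function of the perturbed entries $\mathcal{A}'[v,w]$ for $v \in N(u,G)$ (the randomized responses applied to \vx's neighbor list) plus its own Laplacian draw, whereas $\fx$ is a function of the perturbed entries $\mathcal{A}'[v,u]$ for $v \in N(w,G)$ (the randomized responses applied to \vq's neighbor list) plus a separate Laplacian draw. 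Because randomized response flips each adjacency entry independently, these two collections of Bernoulli variables are disjoint, and the two Laplacian noises are drawn independently; therefore $\fq$ and $\fx$ are independent and the covariance vanishes. This independence is exactly why splitting the estimation across the two neighborhoods is ``free'' from a variance standpoint, so I would state it carefully rather than treat it as obvious.

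With the cross term eliminated, I would substitute the single-source variances established in Theorem \ref{thm:fq}, namely $\mathrm{Var}(\fq) = \frac{p(1-p)}{(1-2p)^2}d_u + \frac{2(1-p)^2}{(1-2p)^2\varepsilon_2^2}$ and the analogous expression for $\mathrm{Var}(\fx)$ with $d_u$ replaced by $d_w$ (the Laplacian term is identical because both estimators share the sensitivity $\frac{1-p}{1-2p}$). Collecting terms gives
\begin{align*}
\mathrm{Var}(\fds) = \frac{p(1-p)}{(1-2p)^2}\bigl(\alpha^2 d_u + (1-\alpha)^2 d_w\bigr) + \frac{2(1-p)^2}{(1-2p)^2\varepsilon_2^2}\bigl(\alpha^2 + (1-\alpha)^2\bigr).
\end{align*}
Finally I would substitute $p = \frac{1}{1+e^{\varepsilon_1}}$, use $1-2p = \frac{e^{\varepsilon_1}-1}{1+e^{\varepsilon_1}}$ to get $\frac{p(1-p)}{(1-2p)^2} = \frac{e^{\varepsilon_1}}{(1-e^{\varepsilon_1})^2}$ and $\frac{(1-p)^2}{(1-2p)^2} = \frac{e^{2\varepsilon_1}}{(1-e^{\varepsilon_1})^2}$, and factor out $\frac{e^{\varepsilon_1}}{(1-e^{\varepsilon_1})^2}$ to recover the claimed closed form. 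This substitution is purely routine, so I would keep it to a line in the final write-up.
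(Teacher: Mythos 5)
Your proposal is correct and follows essentially the same route as the paper's proof: unbiasedness reduces the L2 loss to $\mathrm{Var}(\fds)$, the cross term vanishes because $\fq$ depends only on the noisy edges incident to \vx while $\fx$ depends only on those incident to \vq (plus independent Laplacian draws), and the single-source variances from Theorem~\ref{thm:fq} are then combined and rewritten via $p = \frac{1}{1+e^{\varepsilon_1}}$. Your write-up is in fact slightly more explicit than the paper's, spelling out the covariance term and the final algebraic substitution, but there is no substantive difference in approach.
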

\begin{proof}

\begin{align*}
    & l2(f^*, \pqx)  = \text{Var}(f^*) = \alpha^2 \text{Var}(\fq) + (1 - \alpha)^2 \text{Var}(\fx) \notag \\
    & = \frac{p(1-p)}{(1-2p)^2} (\alpha^2 d_u + (1 - \alpha)^2 d_w) +
     \frac{2(1-p)^2}{(1-2p)^2\varepsilon_2^2}(\alpha^2 + (1 - \alpha)^2)
     \label{xxx}
\end{align*}  
{
Note that $\tilde{f_u}$ depends on the noisy edges connected to $w$, while $\tilde{f_w}$ depends on the noisy edges connected to $u$. Since they depend on disjoint edges in the noisy graph, $\tilde{f_u}$  and $\tilde{f_w}$ are independent, and their covariance $Cov(\fq, \fx)=0$. Thus, the first step holds. }
Also, the expected L2 loss of $\fx$ is the same as $\fq$ where $d_u$ is replaced by $d_w$. 
Substituting $p = \frac{1}{1+ e^{\varepsilon_1}}$ completes the proof. 
\end{proof}

{Based on Theorem \ref{thm:balance}, the L2 loss of $\fds$ is a function of $\varepsilon_1$ and $\alpha$, where $d_u$ and $d_w$ are constants. 
We denote it by $l2(f^*, \pqx) := F(\varepsilon_1, \alpha)$. 
To obtain estimates for $d_u$ and $d_w$, we can apply the Laplace mechanism in an additional round using a small privacy budget ($\varepsilon_0$). 
If the estimates of $d_u$ and $d_w$ are negative, we can estimate the average vertex degree in $L(G)$ and substitute them.} 
To minimize this loss, we seek values of $\varepsilon_1 \in (0, \varepsilon)$ and $\alpha \in [0,1]$ that minimize $F(\varepsilon_1, \alpha)$. 
We discover that $F$ reaches its global minimum if and only if its partial derivatives $\frac{\partial F}{\partial \alpha} = \frac{\partial F}{\partial \varepsilon_1} =0$. 
However, this results in a transcendental equation that lacks analytical solutions. 
Thus, we resort to Newton's method \cite{galantai2000theory} for high-precision approximate solutions. 
By optimizing $\varepsilon_1$ and $\alpha$, the resulting L2 loss of $\fds$ will be lower than that of both single-source estimators $\fq$ and $\fx$. 

{
We could also optimize the expected L2 loss for \advss, which is a function of $\epsilon_1$ and $deg(u)$. 
Specifically, we could spend a small privacy budget ($\varepsilon_0$) to estimate $deg(u)$ and then apply Newton's method to find the best privacy budgets ($\varepsilon_1$ and $\varepsilon_2$) that minimize the expected L2 loss of \advss ($\varepsilon = \sum_{i=0}^2 \varepsilon_i $). 
In practice, this implementation only outperforms the current \advss with an even separation of privacy budget ($\varepsilon_1 = \varepsilon_2$) when the degree of $u$ is large. 
In addition, it is a special case of \advds where $\alpha =1$. 
}

\begin{algorithm}[tbh]
    \small
	\caption{The \advds algorithm}
	\label{algo:adv2}
	\LinesNumbered
	\KwIn{
            $G$: a bipartite graph; 
            $\varepsilon$: a privacy budget; 
            \vq, \vx: two query vertices
        }
         \KwOut{$\fq(u, w) $}
        \tcp{\textbf{round 1:}}
        $\varepsilon_0 \gets \varepsilon \times 0.05$;\\
        $ d_u \gets deg(u, G) + \text{Lap}(\frac{1}{\varepsilon_0})$;\\
        $ d_w \gets deg(w, G) + \text{Lap}(\frac{1}{\varepsilon_0})$; \\
        $d' \gets$ the average vertex degree on the same layer as \vq;\\
        correct $d_u$ and $d_w$ with $d'$;\\
        find $\alpha$ and $\varepsilon_1$ that minimizes $\text{Var}(f^*)$;\\
        \tcp{\textbf{round 2:}}
        $p \gets \frac{1}{1+ e^{\varepsilon_1}}$;\\
\ForEach{$i \in \{u, w\}$}{
    \ForEach{$j \in$ the opposite layer from $u$ and $w$}{
    perturb 
        $\mathcal{A}'[i,j]  \gets
        \begin{cases}
            1- \mathcal{A}[i,j] , &  \text{w.p. }  \frac{1}{1+ e^{\varepsilon}} \\
            \mathcal{A}[i,j] & \text{w.p. }  \frac{e^{\varepsilon}}{1+ e^{\varepsilon}}
        \end{cases}
        $
    } 
}
send noisy edges to the data curator;\\
$G_{\varepsilon_1}' \gets $ the noisy graph constructed from $\mathcal{A}'[i,j]$;\\
         \tcp{\textbf{round 3:}}
            $\varepsilon_2 \gets \varepsilon - \varepsilon_0 -\varepsilon_1$;\\
            $\fq(u, w) \gets $ the estimator by running Lines 8-15 of Algorithm \ref{algo:adv1};\\
            $\fx(u, w) \gets $ the estimator by running Lines 8-15 of Algorithm \ref{algo:adv1} with \vq and \vx switched;\\                     
	\textbf{return} $\alpha \fq + (1 - \alpha) \fx$;\\
\end{algorithm}

\noindent
{\bf The \advds algorithm. }
In this part, we present the \advdslong algorithm (\advds) which uses an additional round compared to \advss to estimate $d_u$ and $d_w$ and estimate the L2 loss of $\fds$. 
The detailed steps are outlined in Algorithm \ref{algo:adv2}. 
In the first round, \advds uses a small privacy budget $\varepsilon_0$ and applies the Laplace mechanism to obtain unbiased estimates of $d_u$ and $d_w$ (Lines 1-3). 
{Here the global sensitivity of $d_u$ ($d_w$) is one because adding or deleting an edge from the neighbor list of \vq (\vx) changes $d_u$ ($d_w$) by at most one. }
Due to the Laplacian noise, the reported $d_u$ and $d_w$ could be negative. In this case, we correct for any negative value with the estimated average degree of the vertices on the same side as \vq and \vx (Lines 4, 5). 
Then, the \advds algorithm invokes Newton's method to find the pair of $\alpha$ and $\varepsilon_1$ that minimizes the estimated L2 loss of $\fds$. 
In the second round, randomized responses are applied to \vq and \vx with respect to $\varepsilon_1$, leading to the noisy graph $G_{\varepsilon_1}'$ (Lines 7-12). 
In the third round, \advds allocates the remaining privacy budget $\varepsilon_2$ to construct unbiased estimators $\fq$ and $\fx$ from the local neighborhoods of \vq and \vx. Specifically, $\fq$ is derived by executing Lines 8-15 of \advss, while $\fx$ is computed similarly by visiting the neighbors of \vx in $G$ (Lines 14-15). 
Note that when constructing $\fq$ and $\fx$, the global sensitivity analysis is the same as in \advss. In other words, the global sensitivity of $\frac{1-p}{1-2p}$ for each single-source estimator is applied to both $\fq$ and $\fx$ upon construction. 
In the end, \advds returns the weighted average of $\fq$ and $\fx$ where the parameter $\alpha$ is computed in the first round (Line 16). 

{
\noindent
{\bf Theoretical analysis for \advds.} 
Without loss of generality, we assume that $u$ and $w \in L(G)$. 
First, we analyze the computational time complexity of \advds. 
On the vertex side, estimating the average degree of the vertices in $L(G)$ takes $O(n_2)$ time. 
When constructing the noisy graph, the time costs incurred by the randomized responses are $O(n_1)$. 
On the curator side, visiting the neighbors of $u$ and $w$ to compute $\fq$ and $\fx$ takes $O(deg(u, G)  + deg(w, G) $ time. 
Thus, the overall time complexity is $O(n)$.}

{We then analyze the communication costs of \advds, which include: 
    (1) sending the noisy degree of all vertices in $L(G)$,
    (2) sending the noisy edges from $w$ and downloading them to vertex $u$,
    (3) sending the noisy edges from $u$ and downloading them to vertex $w$, and
    (4) sending two single-source estimators $\fq$ and $\fx$ to the data curator. 
Step (1) incurs communication costs of $O(n_2)$. 
The communication costs for Step (2) and Step (3) are proportional to the expected number of noisy edges from $u$ and $w$, which is $(d_u + d_w) \times (1-p) + 2(n_1 - d_w) \times p$, where $p = \frac{1}{1 + e^{\varepsilon_1}}$. 
Step (4) incurs a communication cost of $O(1)$. 
Thus, the overall communication cost is $O(n_2 + \frac{e^{\varepsilon_1}-1}{e^{\varepsilon_1}+1} (d_w+d_u) + \frac{ 2 n_1}{1 + e^{\varepsilon_1}})$. 
}

{
Since the expected L2 loss of \advds has been analyzed in Theorem \ref{thm:balance}, we compare it with the expected L2 loss of \advss in the following theorem. 
\begin{theorem}
\label{thm:compare}
The minimum L2 loss incurred by the double-source estimator $\fds= \alpha \fq + (1 - \alpha) \fx$ is less than or equal to the L2 loss incurred by both single-source estimators $\fq$ and $\fx$. 
$$
\min_{\varepsilon_1, \alpha} \ l2(\fds, \pqx) \leq \min(l2(\fq, \pqx), l2(\fx, \pqx))
$$
\end{theorem}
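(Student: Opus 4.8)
The plan is to recognize that the double-source estimator is a strict generalization of both single-source estimators: setting $\alpha = 1$ gives $\fds = \fq$ and setting $\alpha = 0$ gives $\fds = \fx$. Hence the two single-source estimators correspond precisely to the boundary values $\alpha \in \{0, 1\}$ of the optimization domain, and the minimization $\min_{\varepsilon_1, \alpha}$ ranges over a set that contains the configurations producing $\fq$ and $\fx$. The proof then reduces to the elementary fact that minimizing a function over a larger domain cannot increase its minimum.

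First I would verify that the L2-loss formula of Theorem \ref{thm:balance} specializes correctly at the endpoints. Writing $F(\varepsilon_1, \alpha)$ for this loss and substituting $\alpha = 1$ annihilates every $(1-\alpha)^2$ term, leaving $\frac{p(1-p)}{(1-2p)^2} d_u + \frac{2(1-p)^2}{(1-2p)^2 \varepsilon_2^2}$, which is exactly $l2(\fq, \pqx)$ from Theorem \ref{thm:fq}. By symmetry, $\alpha = 0$ recovers $l2(\fx, \pqx)$ with $d_u$ replaced by $d_w$. Thus $F(\varepsilon_1, 1)$ and $F(\varepsilon_1, 0)$ coincide with the single-source losses evaluated at the same privacy-budget split, and the unbiasedness of $\fds$ (already established via linearity of expectation before Theorem \ref{thm:balance}) guarantees that all three L2 losses equal the corresponding variances, so the comparison is purely variance-to-variance.

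Next I would invoke the monotonicity of the minimum under domain enlargement twice. Since $(\varepsilon_1, 1)$ and $(\varepsilon_1, 0)$ are feasible points for the joint minimization over $(\varepsilon_1, \alpha)$, it follows that $\min_{\varepsilon_1, \alpha} F(\varepsilon_1, \alpha) \leq F(\varepsilon_1, 1) = l2(\fq, \pqx)$ and likewise $\min_{\varepsilon_1, \alpha} F(\varepsilon_1, \alpha) \leq F(\varepsilon_1, 0) = l2(\fx, \pqx)$, both evaluated at the budget split used by \advss. Taking the minimum of the two right-hand sides yields the claimed inequality $\min_{\varepsilon_1, \alpha} l2(\fds, \pqx) \leq \min(l2(\fq, \pqx), l2(\fx, \pqx))$.

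The only subtlety, and the main (though modest) obstacle, is bookkeeping the privacy budgets consistently: \advss uses an even split while \advds reserves $\varepsilon_0$ for degree estimation and then optimizes $\varepsilon_1$. I would handle this by treating all three L2 losses as functions of a common parametrization and observing that the evaluation point of the single-source losses lies inside the feasible region of the double-source minimization, so the containment argument is unaffected. Everything else is a direct substitution check plus the trivial monotonicity of the infimum.
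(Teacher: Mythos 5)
Your proposal is correct and follows essentially the same argument as the paper's own proof: both observe that $\fq$ and $\fx$ are the $\alpha=1$ and $\alpha=0$ specializations of $\fds$, so the joint minimum over $(\varepsilon_1,\alpha)$ is bounded above by either single-source loss. Your additional endpoint-substitution check and the remark on consistent privacy-budget bookkeeping are careful refinements of the same idea, not a different route.
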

\begin{proof}
    Let $L^*$ be the minimized expected L2 loss of the double-source estimator $\fds$. 
    To prove the above inequality, we need to prove that $L^* \leq l2(\fq, \pqx)$ and $L^* \leq l2(\fx, \pqx)$. 
    By construction, $\fq$ is a special case of $\fds$ where $\alpha =1$, i.e., $\fq = \fds |_{\alpha =1}$. 
    Hence, for any privacy budget allocations ($\varepsilon_1$ and $\varepsilon_2$), we have $L^* \leq l2(\fq, \pqx)$. 
    Similarly, $\fx$ is also a special case of $\fds$ where $\alpha =0$, i.e., $\fq = \fds |_{\alpha =0}$. 
    We also obtain $L^* \leq l2(\fx, \pqx)$. 
    Combining these two inequalities completes the proof. 
\end{proof}
}

To illustrate the comparison, we plot the L2 loss of $\fds$, $\fq$, and $\fx$ against varying $\varepsilon_1$ values in Fig.~\ref{fig:variancecompare}, where $\varepsilon=2$. 
The blue curve labeled ``$\alpha = 0$'' represents the L2 loss of $\fx$. 
The red curve labeled ``$\alpha = 1$'' represents the L2 loss of $\fq$. 
The green curve labeled ``$\alpha = 0.5$''  represents the unbiased estimator $f' = (\fq + \fx)/2$. 
The grey horizontal line marks the global minimum L2 loss for $\fds$. 
On the left, when $d_u = 5$ and $d_w = 10$, $f'$ outperforms $\fq$ and $\fx$ and reaches the global minimum. 
On the right, when $d_u$ and $d_w$ are more imbalanced, $\fq$ becomes the best estimator, reaching the global minimum. 
None of the single-source estimators or their average can consistently reach the minimized L2 loss of $\fds$ for all query vertex pairs. 
This is due to the flexibility of $\fds$ in adjusting the privacy budget allocation and the weighting of $\fq$ and $\fx$.

In the following theorem, we verify the compliance of Algorithm \ref{algo:adv2} to \epldp. 
\begin{theorem} 
{
Given a bipartite graph $G$ and a privacy budget $\varepsilon$, Algorithm \ref{algo:adv2} satisfies \epldp. }
\end{theorem}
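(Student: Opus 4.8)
The plan is to invoke the \emph{sequential composition} property of edge LDP, following the template of the proof for Algorithm~\ref{algo:adv1} but accounting for the extra round and the data-dependent budget split. First I would decompose the three rounds of \advds and track, separately for each query vertex $u$ and $w$, how much privacy budget is consumed against its true neighbor list. In round~1, the noisy degrees $d_u$ and $d_w$ are released through the Laplace mechanism; since the global sensitivity of a degree is one (adding or deleting a single edge changes it by at most one), this step satisfies $\varepsilon_0$-edge LDP for each query vertex. In round~2, randomized responses are applied to the neighbor lists of $u$ and $w$ with parameter $\varepsilon_1$, which satisfies $\varepsilon_1$-edge LDP for each vertex exactly as in \advss. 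In round~3, each single-source estimator ($\fq$ for $u$, $\fx$ for $w$) is released via the Laplace mechanism scaled to the global sensitivity $\frac{1-p}{1-2p}$ already established in the \advss analysis, so this step satisfies $\varepsilon_2$-edge LDP for the respective vertex.

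The key structural observation is that each vertex's true neighbor list is accessed in exactly three places—the degree release, the randomized response, and its own single-source estimator—while the cross-terms are harmless. Specifically, when $u$ computes $\fq$ it consumes the downloaded noisy edges of $w$, but those are already an $\varepsilon_1$-private output of $w$, so by immunity to post-processing their use adds nothing to $w$'s budget; the situation for $\fx$ is symmetric. Likewise, the returned weighted average $\alpha \fq + (1-\alpha)\fx$ is a deterministic post-processing of the two released estimators, with $\alpha$ a public constant, and therefore consumes no additional budget. I would conclude that, by sequential composition, the total budget spent on the randomizer of each query vertex is $\varepsilon_0 + \varepsilon_1 + \varepsilon_2$.

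The main obstacle, and the step that genuinely distinguishes this proof from the one for \advss, is that $\varepsilon_1$ (and hence $\varepsilon_2 = \varepsilon - \varepsilon_0 - \varepsilon_1$) is chosen by Newton's method from the \emph{noisy} degree estimates, so the budget split is itself data-dependent. I would resolve this using \emph{adaptive} sequential composition: conditioned on the round-1 output, the quantities $\alpha$ and $\varepsilon_1$ are fixed constants obtained by post-processing an $\varepsilon_0$-private release, so rounds~2 and~3 satisfy $\varepsilon_1$- and $\varepsilon_2$-edge LDP respectively for every fixed realization. The decisive point is that, whatever split is realized, the three budgets always sum to $\varepsilon_0 + \varepsilon_1 + (\varepsilon - \varepsilon_0 - \varepsilon_1) = \varepsilon$ by construction, so the composed guarantee is the deterministic value $\varepsilon$ and no worst-case-over-randomness argument is required. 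This establishes that the randomizer of each query vertex satisfies $\varepsilon$-edge LDP, completing the proof.
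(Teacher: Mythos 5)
Your proof is correct and reaches the paper's conclusion, but by a somewhat different route. The paper invokes two black-box results: \emph{parallel composition} within each round (the degree releases of $u$ and $w$, and likewise the two estimator releases, act on disjoint neighbor lists, so each round costs the maximum rather than the sum of the per-vertex budgets) and \emph{sequential composition} across the three rounds, giving $\varepsilon_0+\varepsilon_1+\varepsilon_2=\varepsilon$. You instead unroll this into a fine-grained per-vertex accounting: you charge each query vertex's own neighbor list $\varepsilon_0$ for the degree release, $\varepsilon_1$ for randomized response, and $\varepsilon_2$ for the Laplace release of its own single-source estimator, and you dismiss the cross-terms --- $u$'s use of $w$'s noisy edges inside $\fq$, and the final weighted average $\alpha\fq+(1-\alpha)\fx$ --- by immunity to post-processing. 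This is essentially what the paper's parallel-composition step formalizes, done by hand, and it matches the per-vertex form in which edge LDP is defined. The genuine value your proof adds is the last step: the split $(\varepsilon_1,\varepsilon_2,\alpha)$ is computed by Newton's method from the \emph{noisy} round-1 degrees, so the later budgets are data-dependent, and plain (non-adaptive) sequential composition does not literally apply. You resolve this with adaptive composition, conditioning on the round-1 output, under which $\varepsilon_1$ and $\alpha$ are fixed by post-processing of an $\varepsilon_0$-private release and the realized budgets always sum to $\varepsilon$ deterministically. The paper's proof silently glosses over this adaptivity, so your argument is strictly more careful; what the paper's approach buys in exchange is brevity and reliance on standard cited theorems.
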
 
\begin{proof}
{
We use the {\em Sequential Composition} and {\em Parallel Composition} theorems for \epldp \cite{jiang2021applications}. 
Parallel composition theorem states that if different \epldp algorithms are applied to disjoint datasets with privacy budgets \(\varepsilon_i\), the composite algorithm satisfies \(\max_i \varepsilon_i\)-LDP. 
In the first round, each vertex reports its degree using the Laplace mechanism, achieving \(\varepsilon_0\)-edge LDP. 
By parallel composition, this round satisfies \(\varepsilon_0\)-edge LDP. 
In the second round, randomized responses provide \(\varepsilon_1\)-edge LDP. 
In the third round, building \(\fq\) and \(\fx\) satisfies \(\varepsilon_2\)-edge LDP. 
By parallel composition, this round satisfies \(\varepsilon_2\)-edge LDP. 
By sequential composition, Algorithm \ref{algo:adv2} satisfies \(\epsilon\)-LDP with \(\varepsilon = \varepsilon_0 + \varepsilon_1 + \varepsilon_2\). 
}
\end{proof}

\noindent
{\bf Summary of the expected L2 losses of all algorithms.} 
{
In Table \ref{tab:complexitycompare}, we summarize the expected L2 losses of all privacy-preserving algorithms for estimating the number of common neighbors. 
The expected L2 loss of \bs is smaller than that of \naive, with \bs having an expected L2 loss of $O(n_1)$ compared to \naive's $O(n_1^2)$. 
In addition, the expected L2 losses of \advss and \advds are lower than those of \naive and \bs because they do not depend on the number of vertices in the graph. 
Between \advds and \advss, as analyzed in Theorem 9, the minimized loss for \advds is smaller than \advss because \advss is a special case of \advds where $\alpha = 0$ or $\alpha = 1$. 
Note that for \bs, \advss, and \advds, which are unbiased estimators, their expected L2 losses can offer insight into their deviation from the true value by applying \emph{Chebyshev's inequality} \cite{saw1984chebyshev}. 
For instance, for the \bs algorithm, we know $\mathbb{E}(\foneround(u, w)) = \pqx$ and that $\mathrm{Var}(\foneround(u, w)) = \frac{p^2(1-p)^2}{(1-2p)^4} n_1
    + \frac{p(1-p)}{(1-2p)^2}(d_w + d_w ) $. 
Chebyshev's inequality states that for any $k >0$: 
$$
P\left(|\foneround(u, w) - \pqx| \geq k \sqrt{\mathrm{Var}(\foneround(u, w))} \right) \leq \frac{1}{k^2}.
$$
Similar probabilistic bounds can be derived for the \advss and \advds algorithms based on their expected L2 losses.}


\begin{table}[tbh]
\centering
\caption{Summary of Datasets}
\scalebox{0.8}{
\begin{tabular}{c|cc|ccc}
\noalign{\hrule height 1.20pt}
\cellcolor{gray!25} Dataset & \cellcolor{gray!25}Upper & \cellcolor{gray!25} Lower & \cellcolor{gray!25} $|E|$ & \cellcolor{gray!25} $|U|$ & \cellcolor{gray!25} $|L|$ \\ 
\noalign{\hrule height 0.7pt}
Rmwiki (RM) & User & Article & 58.0K & 1.2K & 8.1K \\
Collaboration (AC) & Author & Paper & 58.6K & 16.7K & 22.0K \\
Occupation (OC) & Person & Occupation & 250.9K & 127.6K & 101.7K \\
Bag-kos (DA) & Document & Word & 353.2K & 3.4K & 6.9K \\
Bpywiki (BP) & User & Article & 399.7K & 1.3K & 57.9K \\
Tewiktionary (MT) & User & Article & 529.6K & 495 & 121.5K \\
Bookcrossing (BX) & User & Book & 1.1M & 105.3K & 340.5K \\
Stackoverflow (SO) & User & Post & 1.3M & 545.2K & 96.7K \\
Team (TM) & Athlete & Team & 1.4M & 901.2K & 34.5K \\
Wiki-en-cat (WC) & Article & Category & 3.8M & 1.9M & 182.9K \\
Movielens (ML) & User & Movie & 10.0M & 69.9K & 10.7K \\
Epinions (ER) & User & Product & 13.7M & 120.5K & 755.8K \\
Netflix (NX) & User & Movie & 100.5M & 480.2K & 17.8K \\
Delicious-ui (DUI) & User & Url & 101.8M & 833.1K & 33.8M \\
Orkut (OG) & User & Group & 327.0M & 2.8M & 8.7M \\
\noalign{\hrule height 1.20pt}
\end{tabular}}

\label{tab:summary}
\end{table}
\begin{table*}[tbh]
\centering
\caption{Summary of time costs, expected L2 losses, and communication costs}
        \myspace
        \myspace
\scalebox{0.9}{
\begin{tabular}{c|c|c|c|c} 
\noalign{\hrule height 1pt}
\cellcolor{gray!25} Algorithm   & \cellcolor{gray!25} Time cost &  \cellcolor{gray!25} Unbiased & \cellcolor{gray!25} Expected L2 loss & \cellcolor{gray!25} Communication cost\\
\noalign{\hrule height 0.56pt}
\naive    & $O(n_1)$ & $\times$  & \makecell{ \vspace{0.3mm} $\errnaive$ \vspace{0.3mm}}  & $O(\frac{e^{\varepsilon}-1}{e^{\varepsilon}+1}(d_u + d_w) + \frac{2 n_1}{1 + e^{\varepsilon}} )$ \\
\bs       & $O(n_1)$ & \checkmark & $ \erroneround $ & $O( \frac{e^{\varepsilon}-1}{e^{\varepsilon}+1}(d_u + d_w) + \frac{2 n_1}{1 + e^{\varepsilon}} )$ \\
\advss  & $O(n_1)$ & \checkmark & $O(\frac{e^{\varepsilon_1}}{(1 - e^{\varepsilon_1})^2} (d_u + \frac{2e^{\varepsilon_1}}{\varepsilon_2^2}))$ & $O(\frac{e^{\varepsilon_1}-1}{e^{\varepsilon_1}+1}d_w + \frac{ n_1}{1 + e^{\varepsilon_1}} )$ \\
\advds  & $O(n)$ & \checkmark & 
\makecell{
$O(\frac{e^{\varepsilon_1}}{(1 - e^{\varepsilon_1})^2} (\alpha^2 d_u + (1-\alpha)^2 d_w)+ $ 
$\frac{2e^{2\varepsilon_1}}{(1 - e^{\varepsilon_1})^2} \frac{\alpha^2 + (1-\alpha)^2}{\varepsilon_2^2})$
\vspace{0.2mm}
} & $O(n_2 + \frac{e^{\varepsilon_1}-1}{e^{\varepsilon_1}+1} (d_w+d_u) + \frac{ 2 n_1}{1 + e^{\varepsilon_1}})$ \\
\noalign{\hrule height 1pt}
\end{tabular}
}
\begin{flushleft}
* Without loss of generality, we assume $u, w \in L(G)$. 
$\varepsilon$ is the overall privacy budget. 
In \advss and \advds, $\varepsilon_1$ represents the privacy budget for randomized responses, while $\varepsilon_2$ denotes the budget for the Laplace mechanism. 
$\alpha \in [0,1]$ adjusts the contribution of $\fq$ and $\fx$. 
$d_u$ and $d_w$ represent the degrees of $u$ and $w$ in $G$. 
$n_1 = |U(G)|$, $n_2 = |L(G)|$, and $n = |U(G) \cup L(G)|$. 
\end{flushleft}
        \myspace
        \myspace
\label{tab:complexitycompare}
\end{table*}

\section{Experimental evaluation}
\label{sec:exp}

In this section, we evaluate the proposed common neighbor estimation algorithms under \epldp through experiments. 

\subsection{Experimental Settings}
\begin{figure}[ht]
\centering
\subfigure[The mean absolute error]{ 
\includegraphics[width=0.5\textwidth]{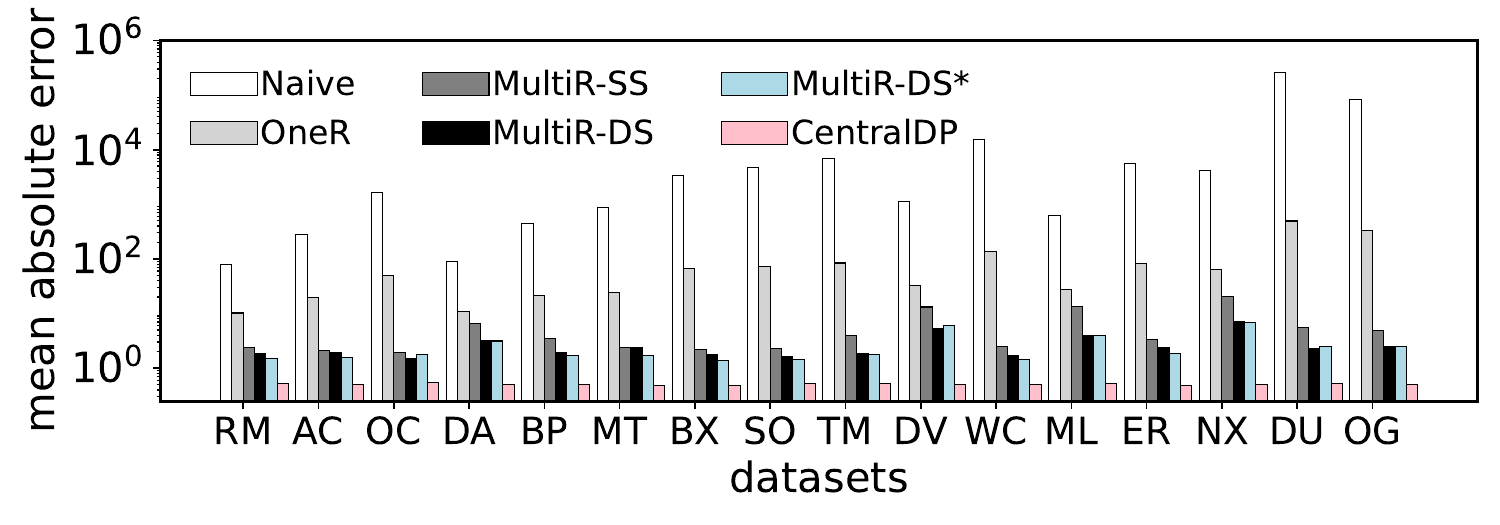}
\label{default:effect}
}
\subfigure[The computational time cost
]{
\includegraphics[width=0.5\textwidth]{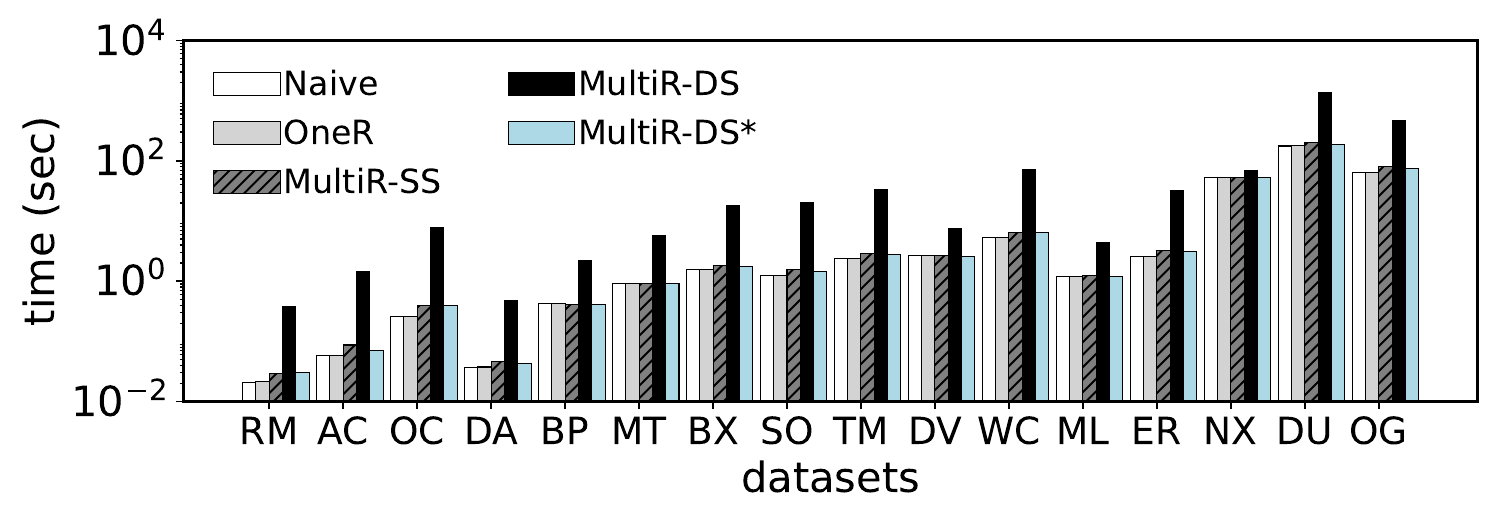}
\label{default:time}
}
\caption{Performance on different datasets ($\epsilon = 2$) }
\myspace
\label{fig.default}
\end{figure}

\begin{figure*}[thb]
\centering
    \begin{minipage}{0.66\textwidth}
    \centering
        \subfigure[\texttt{Stackoverflow}]{
        \includegraphics[width=\figsizeone\textwidth]{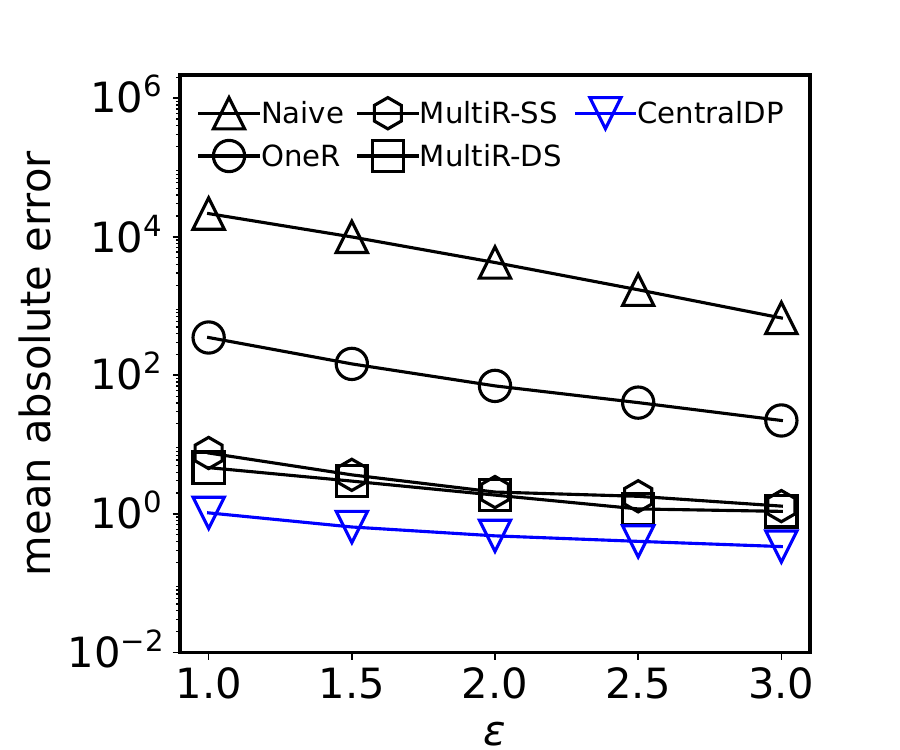}
        }
        \subfigure[\texttt{Team}]{
        \includegraphics[width=\figsizeone\textwidth]{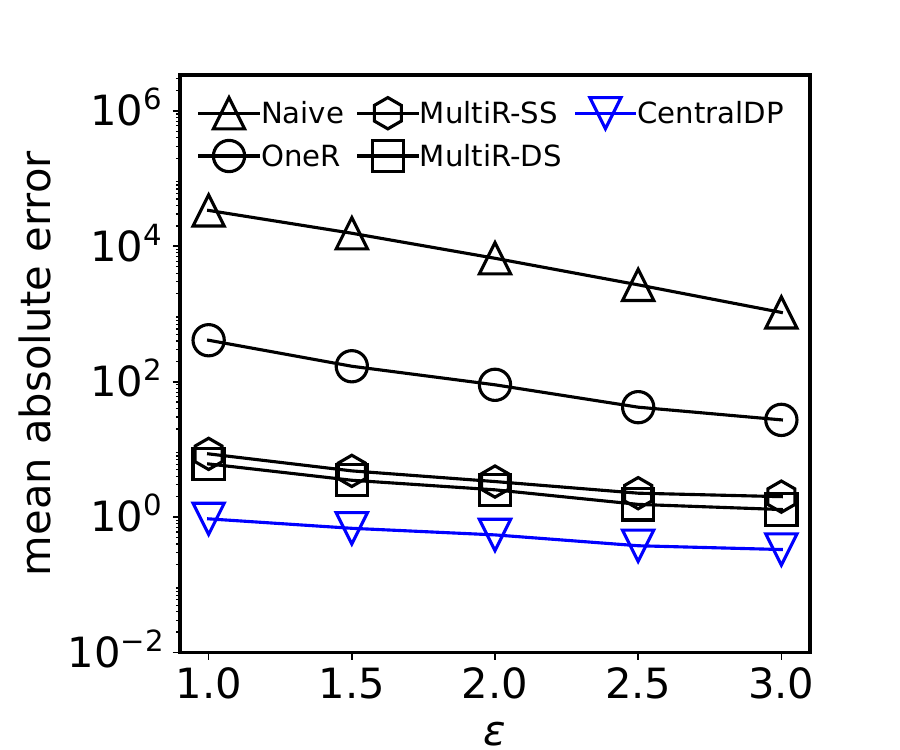}
        }
        \subfigure[\texttt{Wiki-En-Cat}]{
        \includegraphics[width=\figsizeone\textwidth]{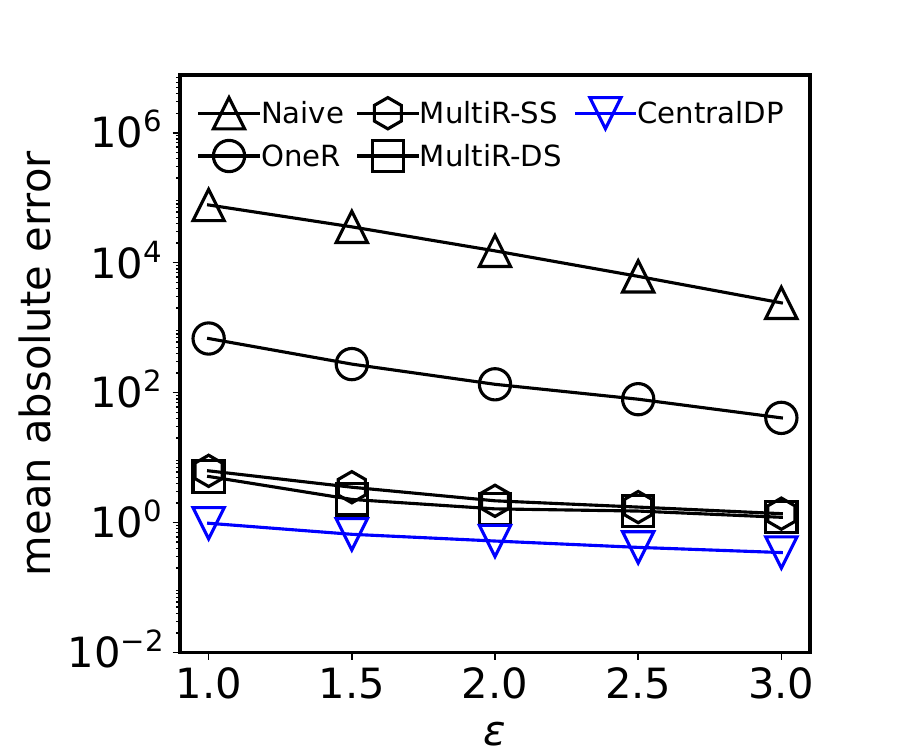}
        }
        \subfigure[\texttt{Movielens}]{
        \includegraphics[width=\figsizeone\textwidth]{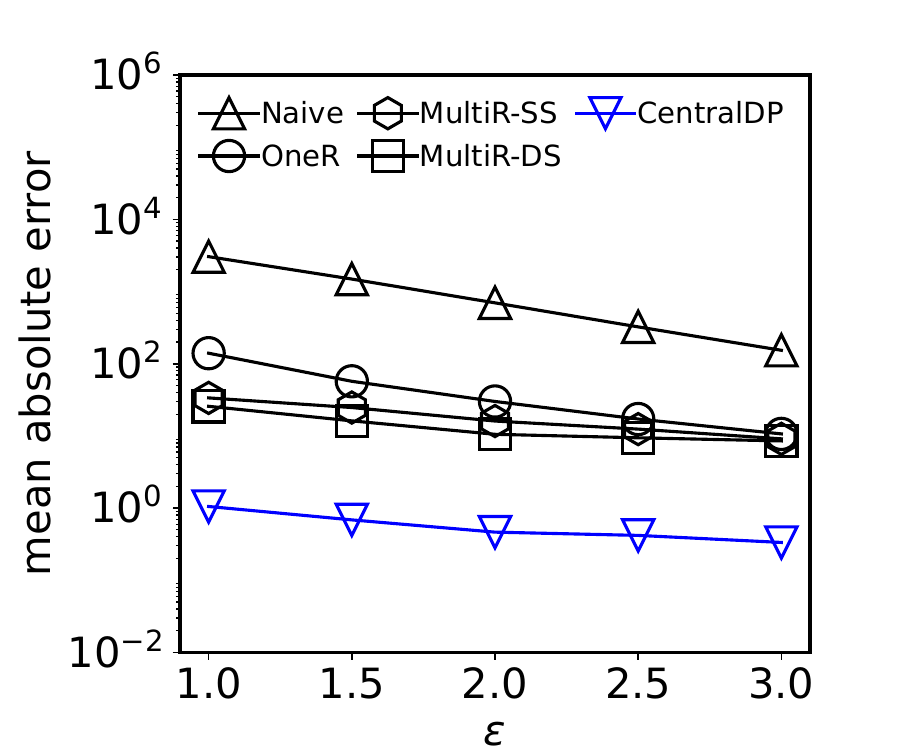}
        }
        \myspace
        \subfigure[\texttt{Epinions}]{
        \includegraphics[width=\figsizeone\textwidth]{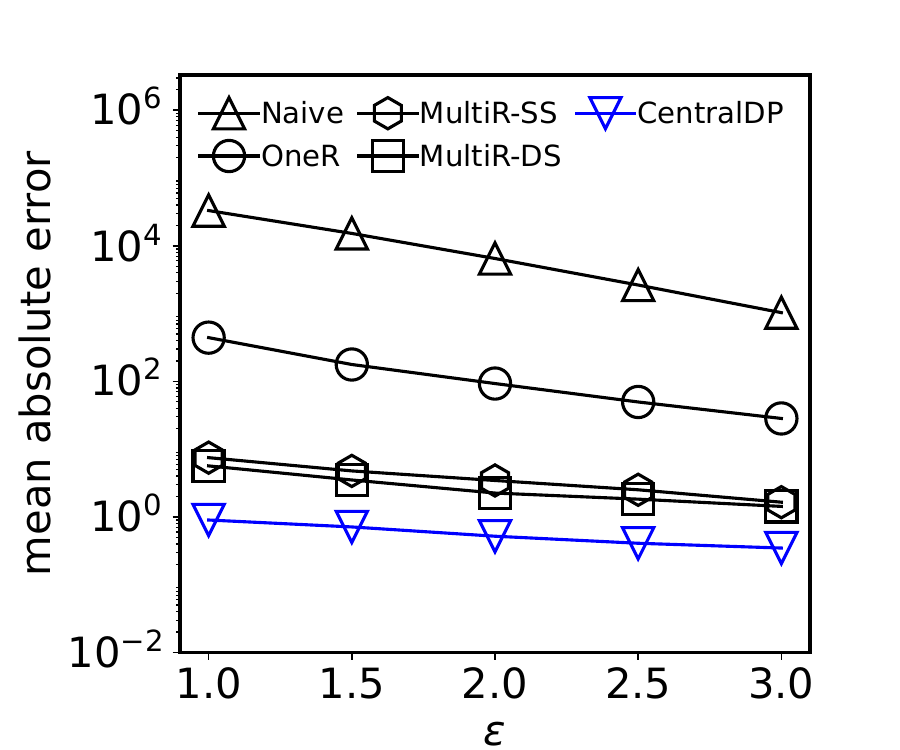}
        }
        \subfigure[\texttt{Netflix}]{
        \includegraphics[width=\figsizeone\textwidth]{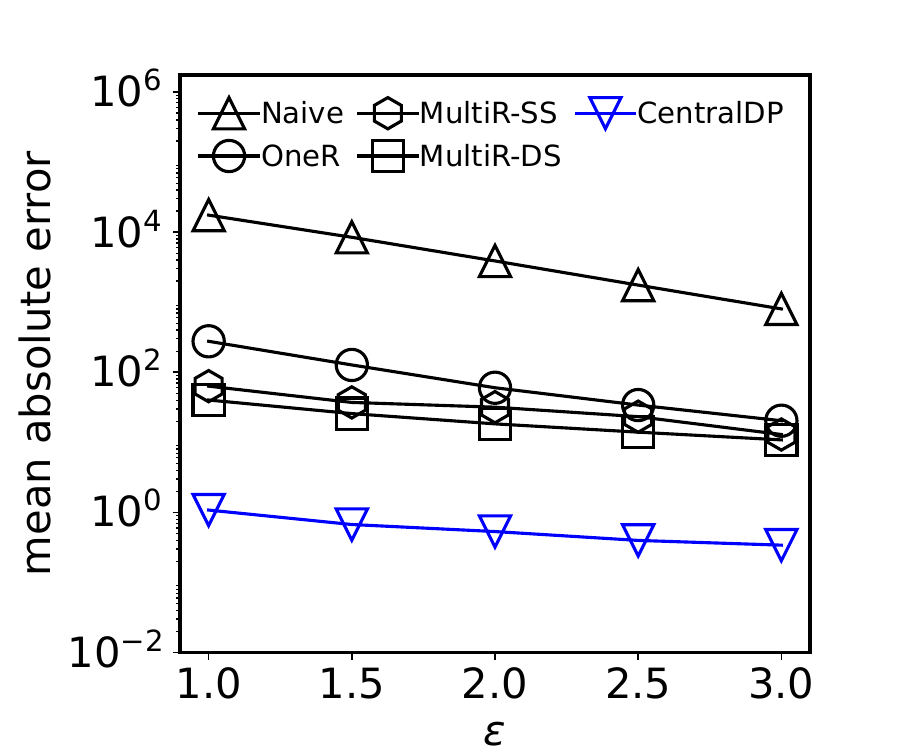}
        }
        \subfigure[\texttt{Delicious-ui}]{
        \includegraphics[width=\figsizeone\textwidth]{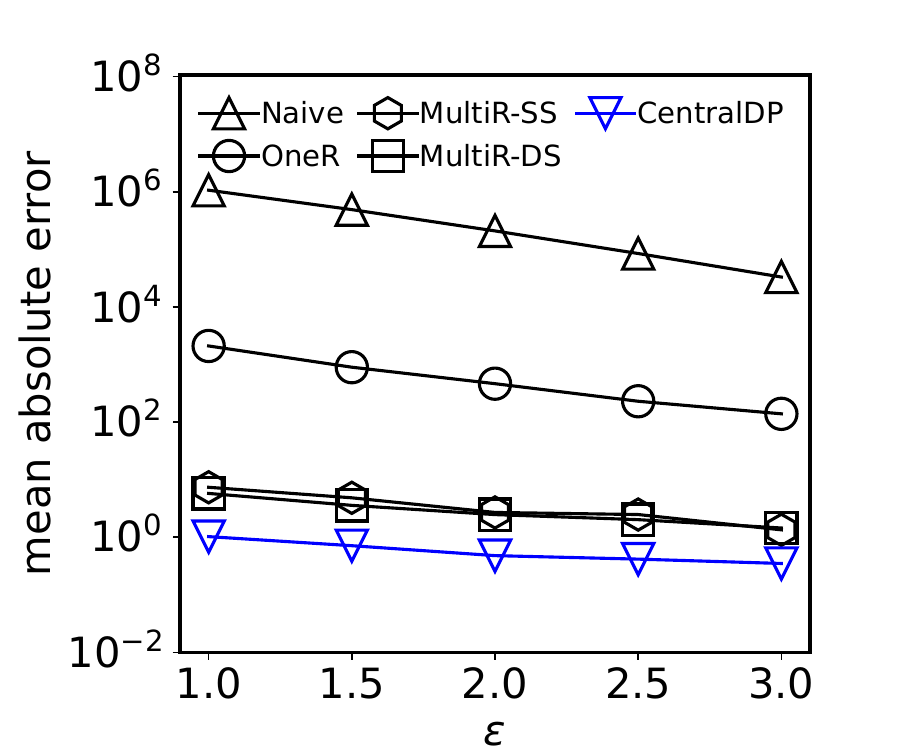}
        }
        \subfigure[\texttt{Orkut}]{
        \includegraphics[width=\figsizeone\textwidth]{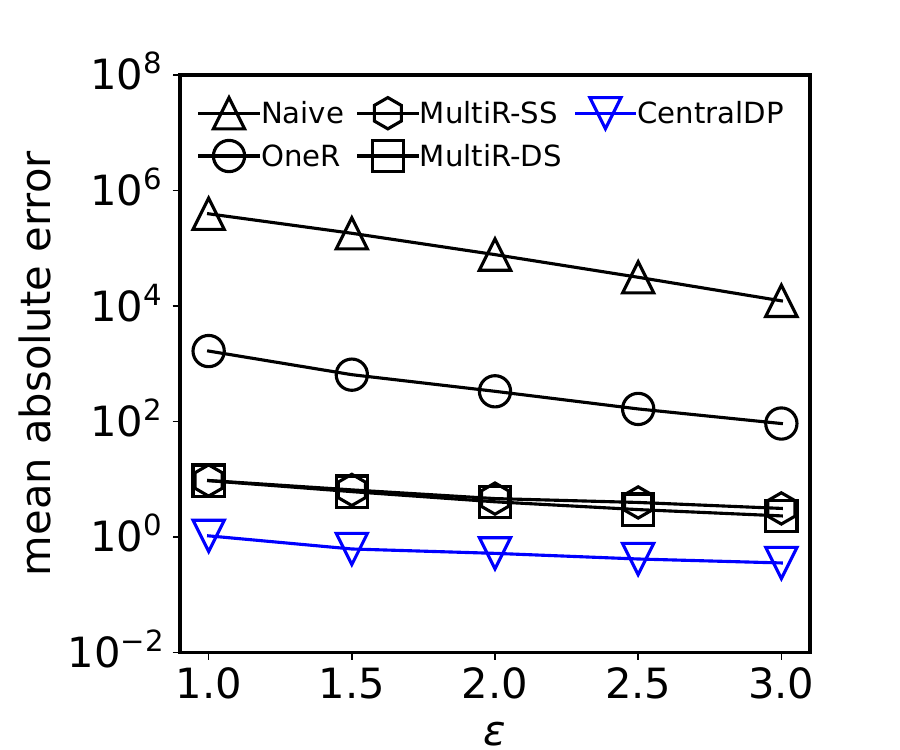}
        }
    \myspace
    \caption{{Effect of $\varepsilon$ on the mean relative errors of \naive, \bs, \\ and \advds.}}
    \label{fig.vary}
    \end{minipage}
\hfill
    \begin{minipage}{0.33\textwidth}
    \centering
        \subfigure[ \texttt{Team}]{
        \includegraphics[width=\figsizetwo\textwidth]{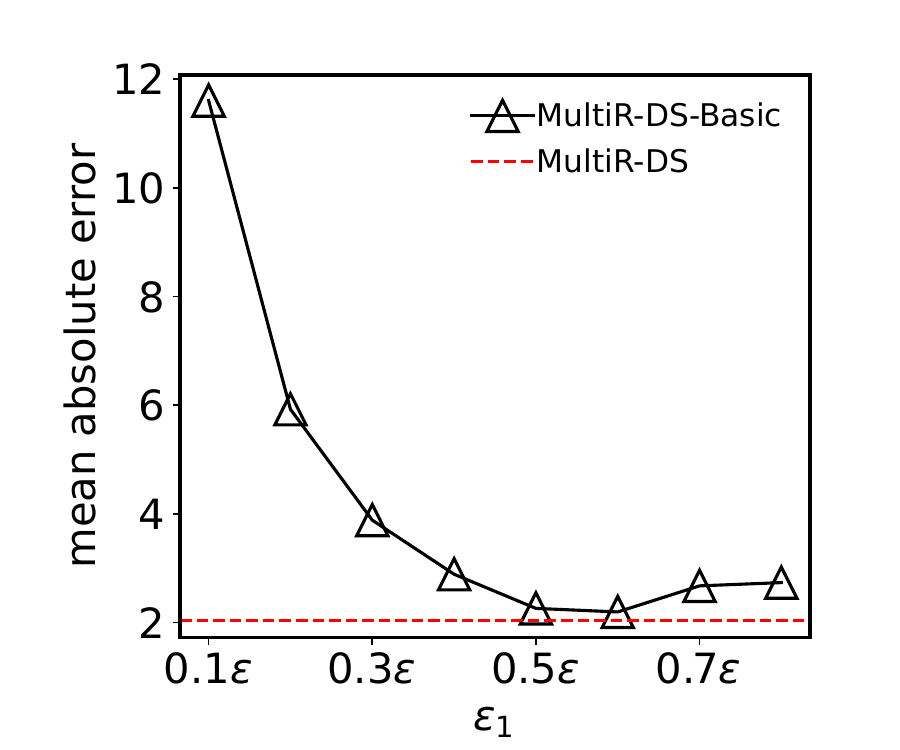}}
        \subfigure[ \texttt{Bookcrossing}]{
        \includegraphics[width=\figsizetwo\textwidth]{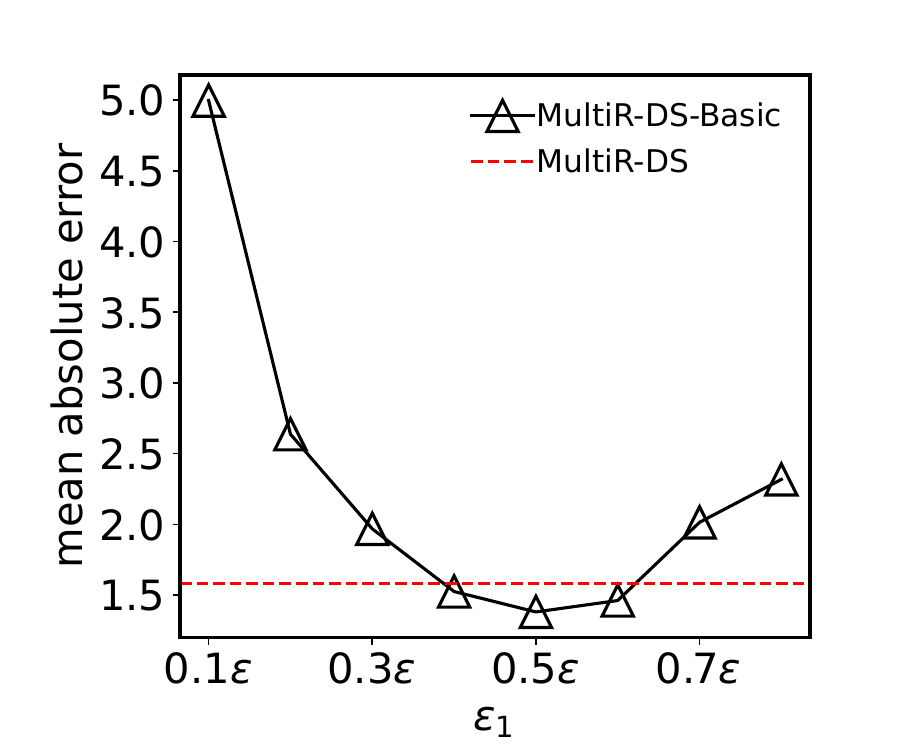}}
        \myspace
        \subfigure[ \texttt{Delicious}]{
        \includegraphics[width=\figsizetwo\textwidth]{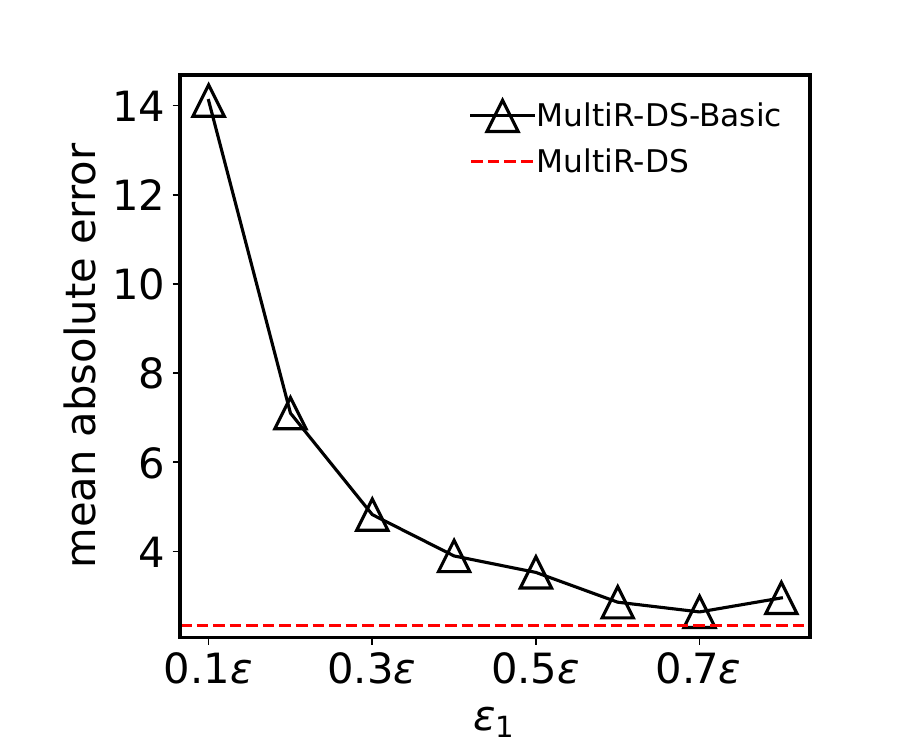}}
        \subfigure[ \texttt{Orkut}]{
        \includegraphics[width=\figsizetwo\textwidth]{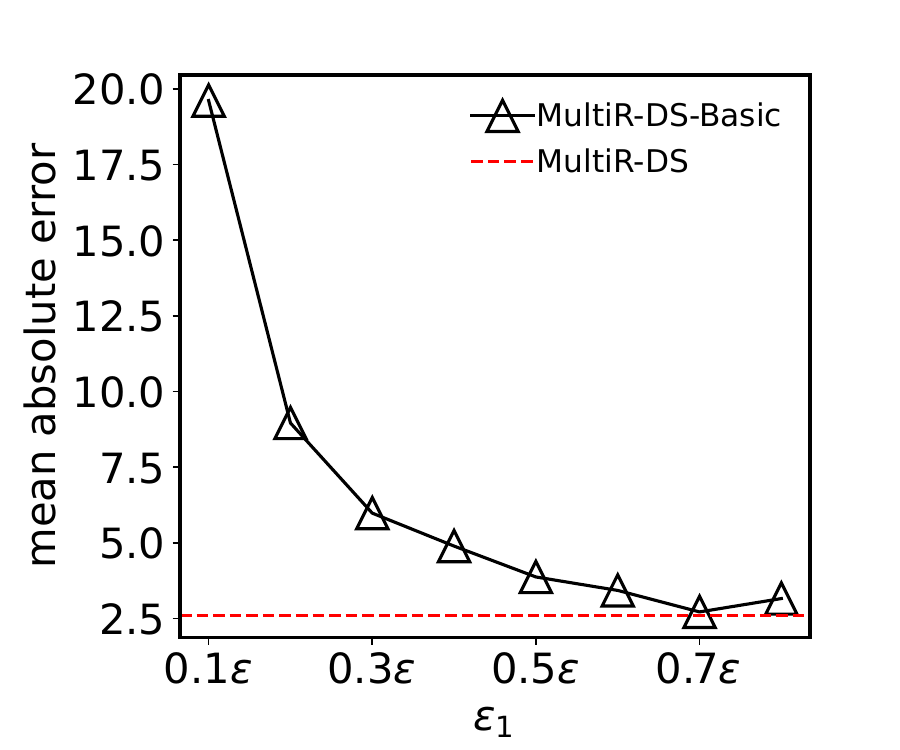}}
        \myspace
    \caption{Effectiveness of privacy budget allocation optimization.}
    \label{Fig.find}
    \end{minipage}
\end{figure*}

\begin{figure*}[thb]
\centering
    \begin{minipage}{0.33\textwidth}
    \centering
    \subfigure[ \texttt{Team}]{
    \includegraphics[width=\figsizetwo\textwidth]{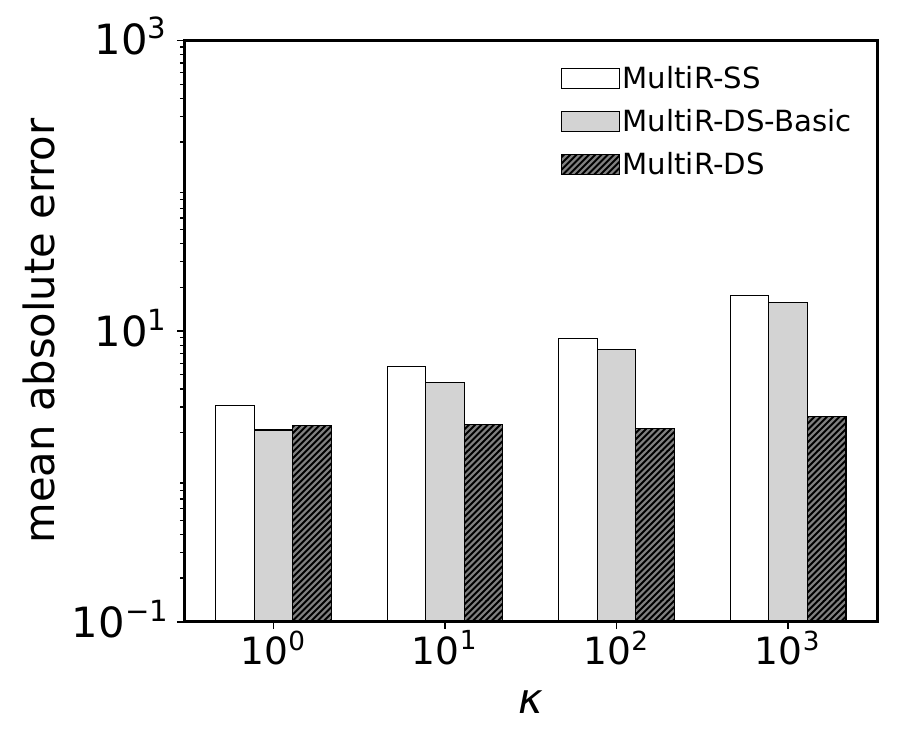}}
    \subfigure[ \texttt{Bookcrossing}]{
    \includegraphics[width=\figsizetwo\textwidth]{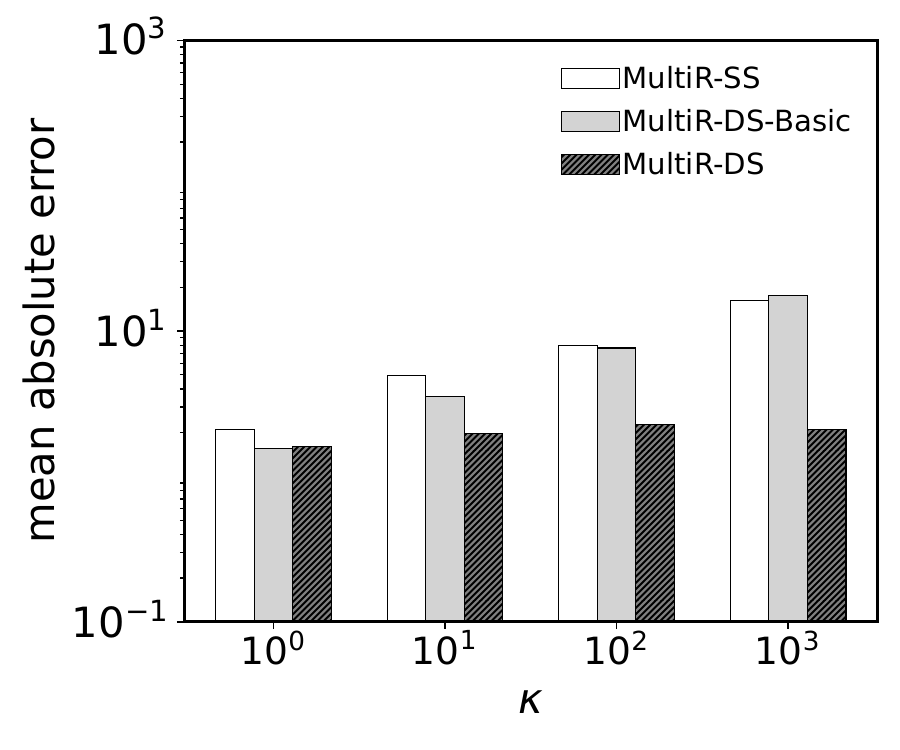}}
        \myspace
    \subfigure[ \texttt{Delicious}]{
    \includegraphics[width=\figsizetwo\textwidth]{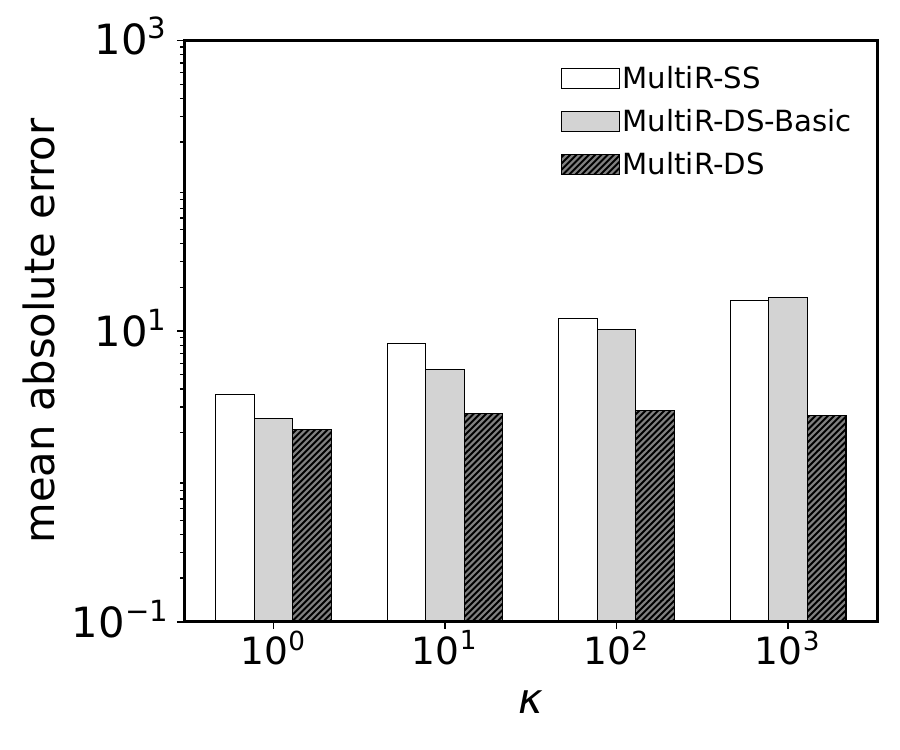}}
    \subfigure[ \texttt{Orkut}]{
    \includegraphics[width=\figsizetwo\textwidth]{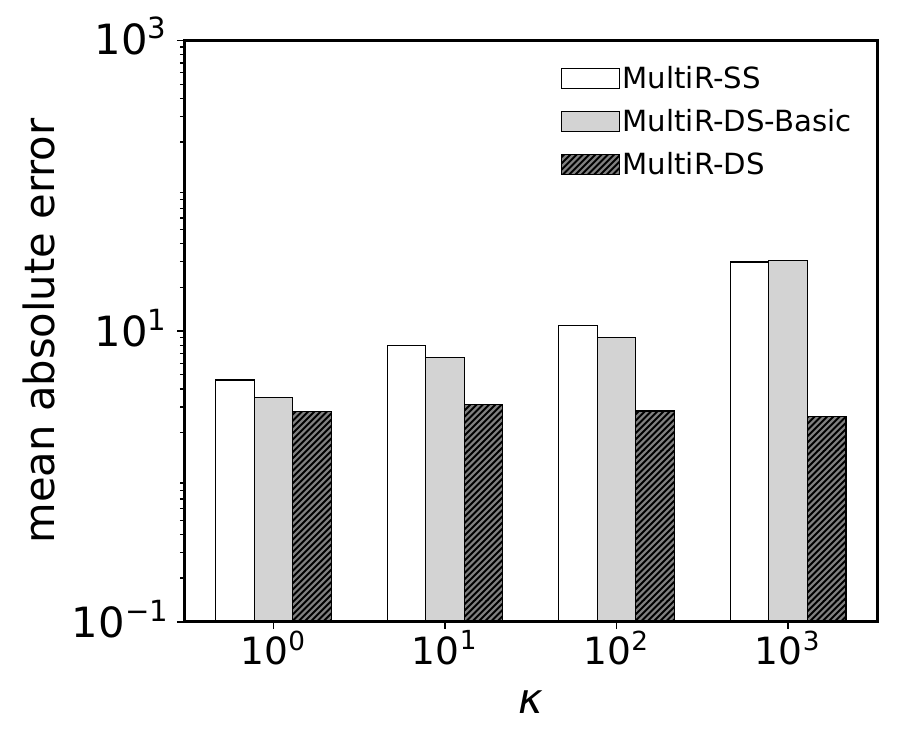}}
        \myspace
    \caption{Effectiveness of \advds. }
    \label{fig.balance}
    \end{minipage}
\hfill
    \begin{minipage}{0.33\textwidth}
    \centering
    \subfigure[\texttt{Wiki-En-Cat}]{
    \includegraphics[width=\figsizetwo\textwidth]{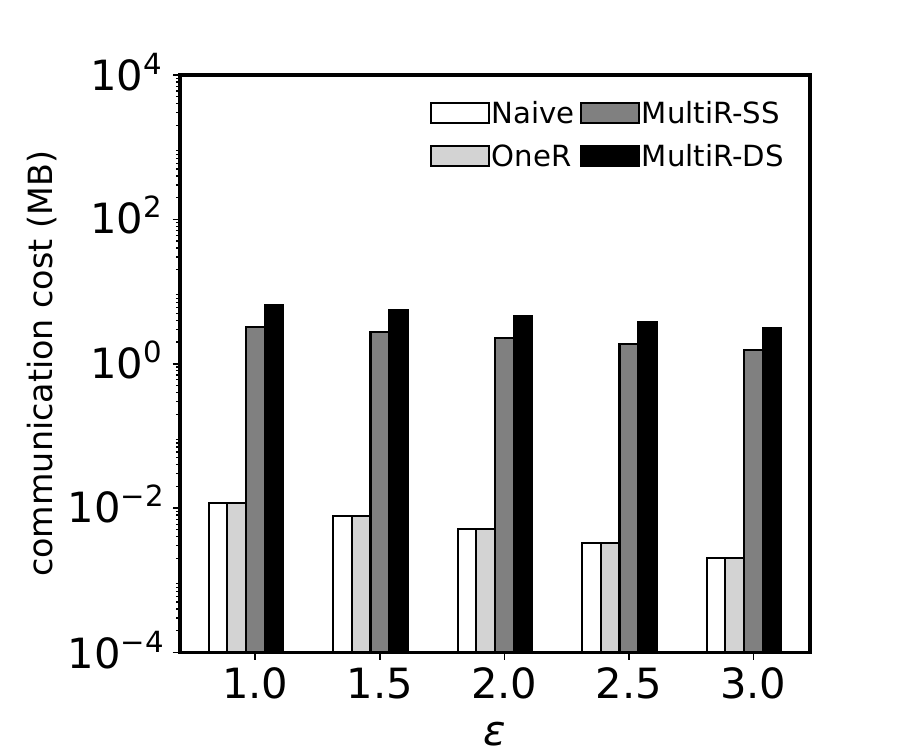}}
    \subfigure[\texttt{Epinions}]{
    \includegraphics[width=\figsizetwo\textwidth]{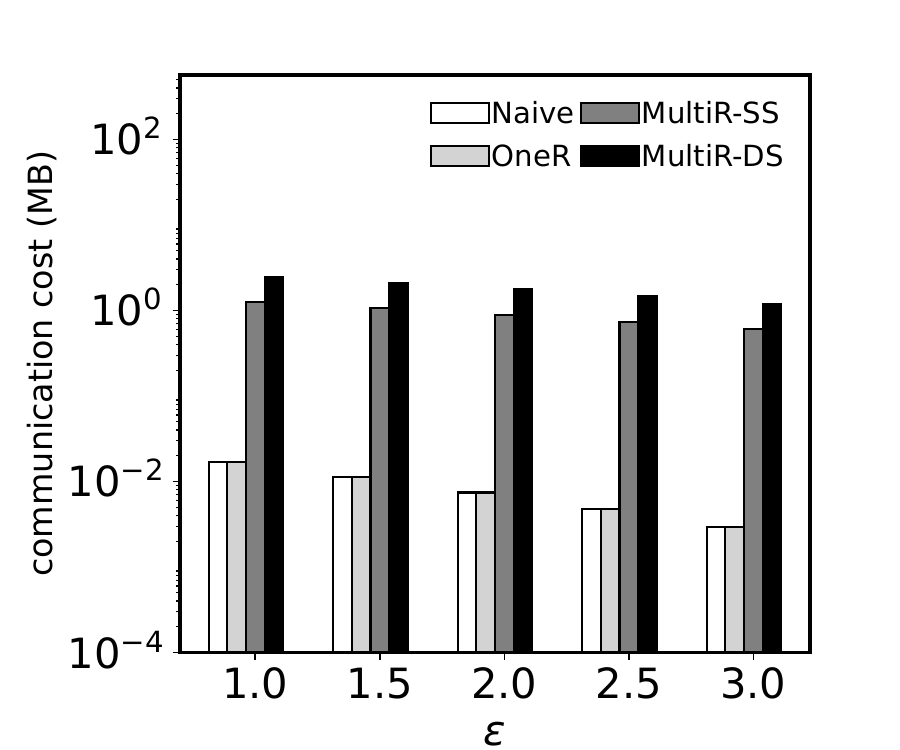}}
        \myspace
    \subfigure[\texttt{Delicious}]{
    \includegraphics[width=\figsizetwo\textwidth]{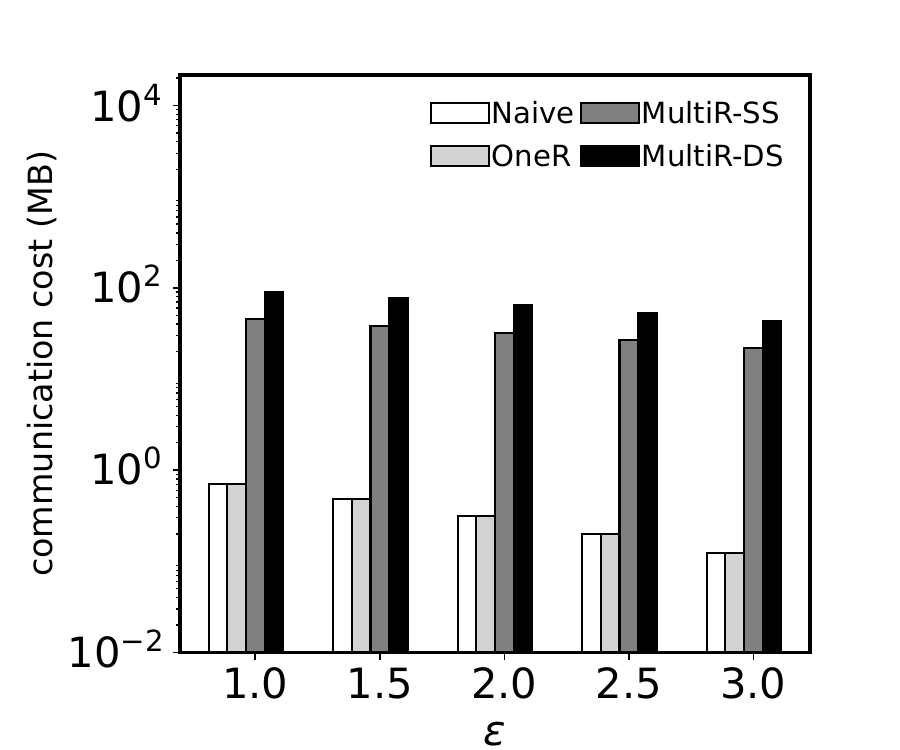}}
    \subfigure[\texttt{Orkut}]{
    \includegraphics[width=\figsizetwo\textwidth]{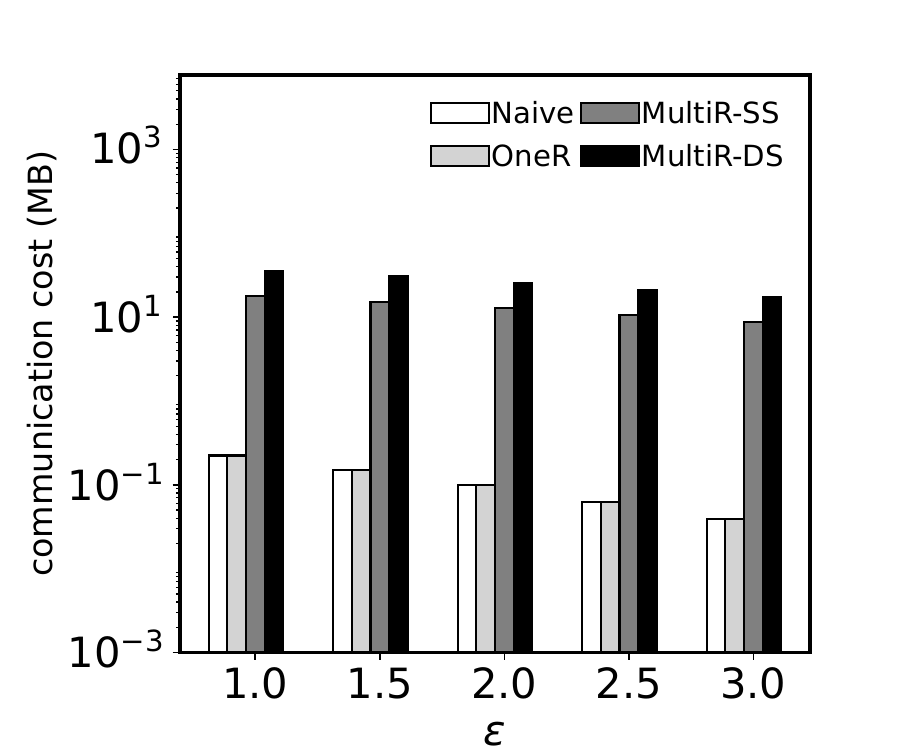}}
        \myspace
    \caption{Communication costs.}
    \label{fig.cost}
    \end{minipage}
\hfill
    \begin{minipage}{0.33\textwidth}
        \subfigure[\texttt{Wiki-En-Cat}]{
        \includegraphics[width=\figsizetwo\textwidth]{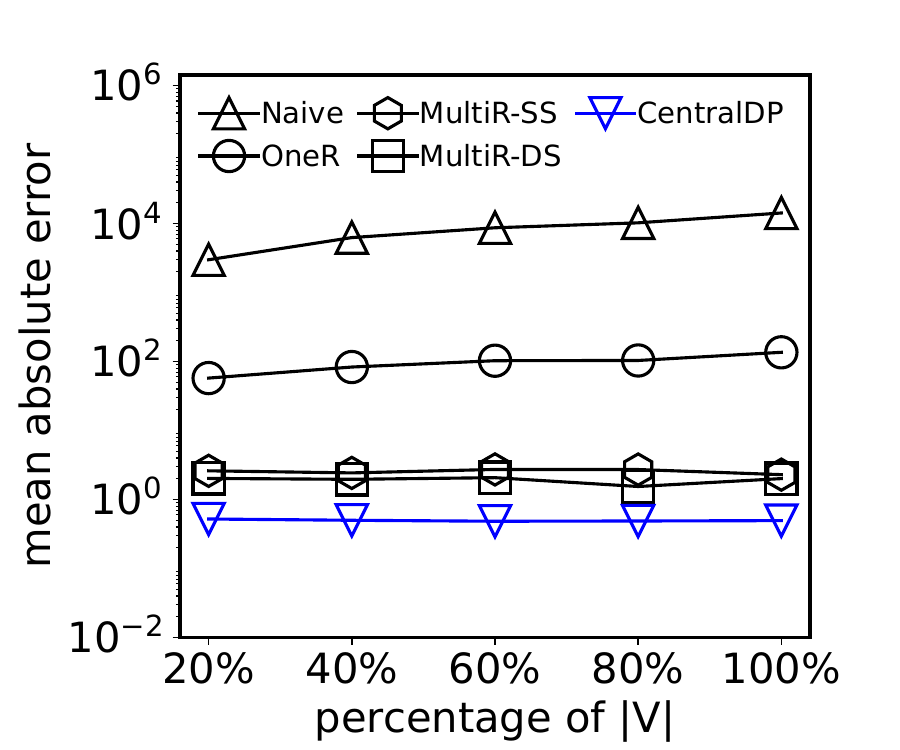}}
        \subfigure[\texttt{Epinions}]{
        \includegraphics[width=\figsizetwo\textwidth]{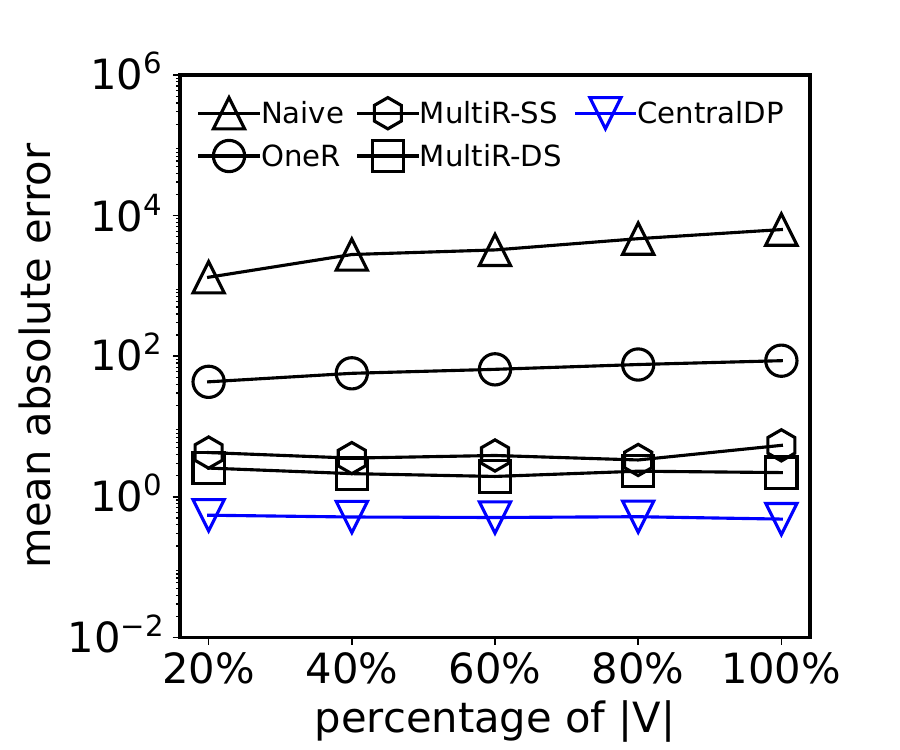}}
        \myspace
        \subfigure[\texttt{Delicious}]{
        \includegraphics[width=\figsizetwo\textwidth]{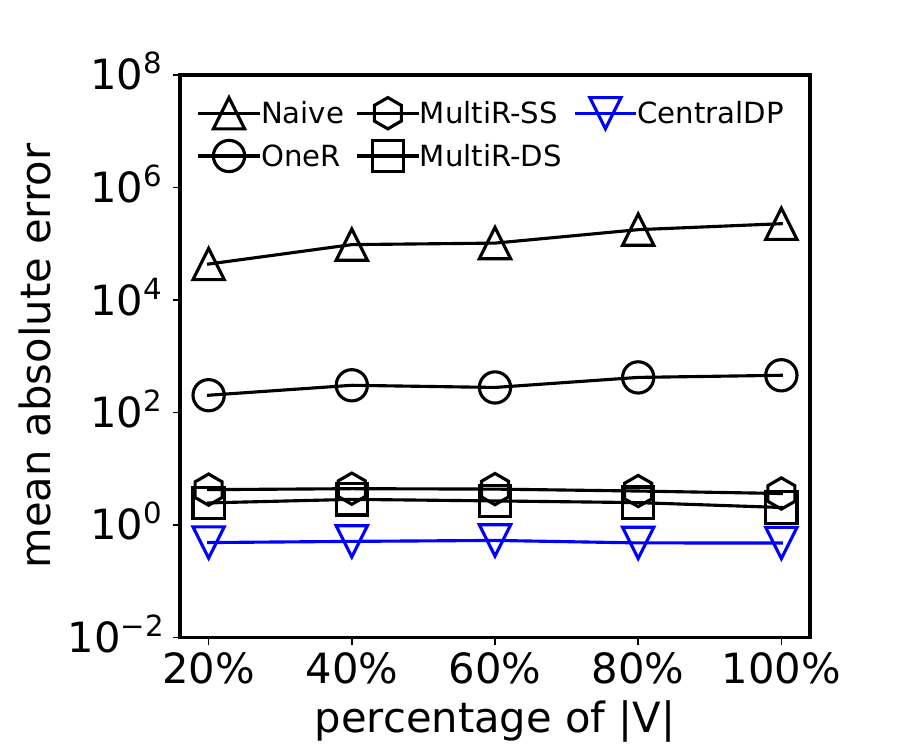}}
        \subfigure[\texttt{Orkut}]{
        \includegraphics[width=\figsizetwo\textwidth]{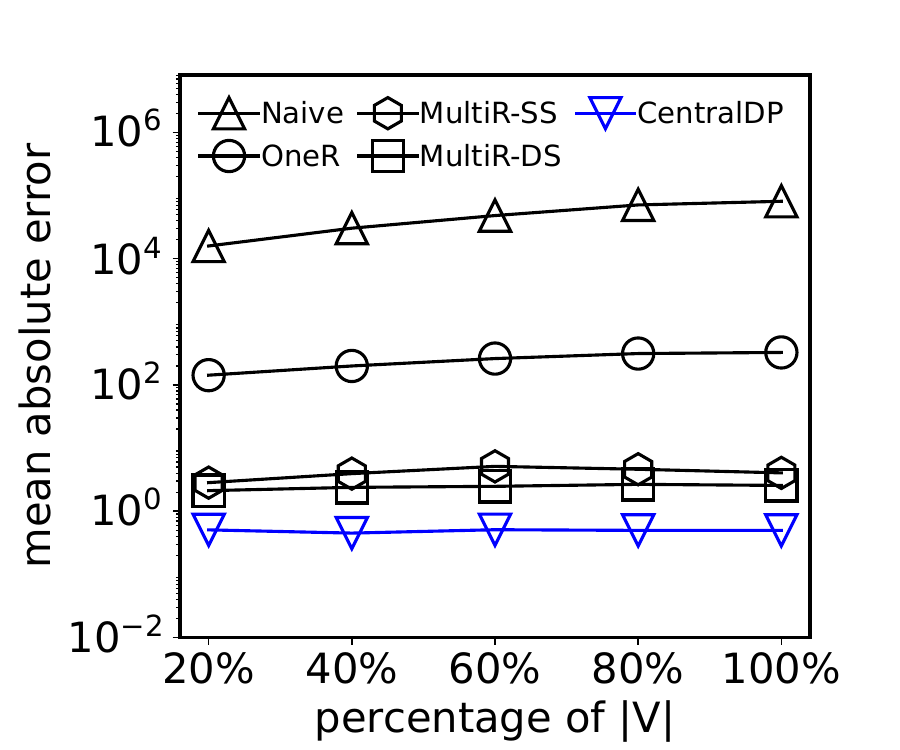}}
        \myspace
        \caption{Effect of the number of vertices.}
        \label{fig.scale}
    \end{minipage}
\end{figure*}

\noindent
{\bf Datasets.}
We use 15 datasets from KONECT ({\url{http://konect.cc/}}). 
Table \ref{tab:summary} shows the statistics of the datasets. 
$|U|$ and $|L|$ are the number of vertices in the upper and lower layers. $|E|$ is the number of edges in the graph. 

\noindent
{\bf Algorithms.} 
We evaluate the following common neighborhood estimation algorithms under \epldp. 
\begin{itemize}
    \item \naive: the algorithm that returns the number of common neighbors between $u$ and $w$ on the noisy graph $G'$;
    \item \bs: the one-round algorithm returns an unbiased estimate of $\pqx$ based on the noisy graph $G'$;
    \item \advss: the multiple-round single-source algorithm that returns the unbiased estimator $\fq(u,w)$ by utilizing the local neighborhood of $u$;     
    \item \advds: the multiple-round double-source algorithm that returns the unbiased estimator $ \alpha \fq + (1 - \alpha) \fx$ ($\alpha \in [0,1]$) by utilizing the local neighborhoods of both $u$ and $w$.
\end{itemize}
{We also implement two variants of \advds: \advdsbasic and \advdeg. 
\advdsbasic returns the average of the two single-source estimators $\fq$ and $\fx$ ($ \frac{\fq +\fx}{2}$). It spends $\varepsilon_1$ on noisy graph construction and $1 - \varepsilon_1$ on the Laplace mechanism and does not estimate $deg(u)$ or $deg(w)$. 
Similarly to \advds, \advdeg returns $ \alpha \fq + (1 - \alpha) \fx$ ($\alpha \in [0,1]$) and adopts the same optimizations for the allocation of privacy budgets to find $\varepsilon_1$ and $\alpha$. 
The difference is that \advdeg assumes that the vertex degrees are public and it does not need an additional round for vertex degree estimation. }

{To better evaluate the edge LDP algorithms, we also implement the \cdp algorithm under the centralized model, which assumes the data curator has access to the entire bipartite graph. \cdp directly applies the Laplacian mechanism to the number of common neighbors of two query vertices. Since the global sensitivity of $\pqx$ in the central model is $1$, \cdp returns $\pqx + Lap(\frac{1}{\varepsilon})$. }
All algorithms are implemented in C++. The experiments are run on a Linux server with an Intel Xeon 6342 processor and 512GB memory.

\noindent
{\bf Parameter settings.} 
By default, the privacy budget $\varepsilon$ is set to $2$. We also allow it to vary from $1$ to $3$ with increments of $0.5$. 
For \advss and \advdsbasic, $\varepsilon_1$ is set to $0.5 \varepsilon$ by default. 
For \advds, we set $\varepsilon_0 = 0.05 \varepsilon$ for degree estimations. 
For each algorithm, we uniformly sample $100$ vertex pairs on the same layer and report the mean absolute error, the average of the absolute differences between the predicted and true values across all sampled vertex pairs. 
To evaluate the performance of \advds, we use $\kappa$ to quantify the imbalance between two vertex degrees. 
Specifically, on a given pair of vertices ($u$ and $w$) with the parameter $\kappa$,
we have $\max(deg(u), deg(w)) > \kappa \times \min(deg(u), deg(w))$. 

\noindent
\textbf{Evaluate the effectiveness of edge LDP algorithms across different datasets.} 
In Fig. \ref{default:effect}, we report the performances of the edge LDP algorithms including \naive, \bs, \advss, \advds, and \advdeg on $100$ uniformly sampled vertex pairs when $\varepsilon = 2$. 
{Note that we also include the performance of \cdp under the centralized model for comparison.} 
First, we observe that the multiple-round algorithms (\advss, \advds, and \advdeg) significantly outperform \naive and \bs across all datasets. 
Specifically, \advss and \advds achieve mean absolute errors lower by up to four and two orders of magnitude, respectively. 
This is because \advss and \advds address the overcounting issue due to the dense noisy graph with \naive by deriving unbiased estimates. 
Meanwhile, compared to the \bs algorithm that produces unbiased estimates by considering all vertices on the opposite layers as possible common neighbors, \advss and \advds induce much smaller mean absolute errors by reducing the candidate pool common neighbors to the neighbors of query vertices. 
We also observe that \advds consistently produces smaller mean absolute errors than \advss. 
For example, on \texttt{Netflix}, the mean absolute error of \advds is approximately one-fifth that of \advss. 
This is because \advds integrates the two single-source estimators and dynamically adjusts the privacy budget allocation based on the query vertices. 
{
In \figuresixa, we observe that \advdeg generally produces slightly smaller mean absolute errors compared to \advds. 
This is because \advdeg does not need to spend an additional privacy budget for degree estimation, which leads to more privacy budgets for noisy graph construction and the Laplace mechanism (i.e., $\varepsilon_1$ and $\varepsilon_2$ becomes larger). }
We also observe that \bs achieves much lower mean absolute errors than \naive because \bs leverages flipping probability to obtain unbiased estimators, which mitigates the over-counting issue in \naive. 

{
In addition, \cdp results in lower errors than all algorithms with edge LDP. 
This illustrates the limitations of the local model in terms of data utility due to stronger privacy guarantees. 
}

\noindent
\textbf{Evaluate the computational time costs across datasets.} 
{
In Fig. \ref{default:time}, we report the computational time costs of \naive, \bs, \advss, \advds, and \advdeg on $100$ vertex pairs when $\varepsilon = 2$. }
Note that our evaluation focuses on the computational time costs incurred by both the vertex side and the data curator side. 
We can observe that the time costs of \naive, \bs, and \advss remain relatively comparable while \advds requires more time. 
This is because the time complexities of \naive, \bs, and \advss depend on the number of vertices on the opposite layers of the query vertices i.e., $O(n_1)$, which is incurred by noisy graph construction. 
Since \advds needs an additional $O( n_2)$ time for the estimation of the average degree, its total time complexity becomes $O(n)$ and exceeds the others. 
Despite this, \advds remains highly efficient and can scale effectively to bipartite graphs with 300 million edges (i.e., \texttt{Orkut}). Also, in practice, the time required for average degree estimation is distributed across vertices. 
{Additionally, we observe that the \advdeg algorithm runs faster than \advds and incurs comparable time costs to \advss. This is because \advdeg does not need an additional round to estimate the vertex degrees. }

\noindent
\textbf{Evaluate the effect of the privacy budget $\varepsilon$.} 
As shown in Fig.~\ref{fig.vary}, we report the mean absolute errors of \naive, \bs, and \advds on 8 datasets, as $\varepsilon$ varies. 
{Note that we also include \cdp under central differential privacy for comparison.} 
We observe that all algorithms produce smaller mean absolute errors as $\varepsilon$ increases, which is consistent with our L2 loss analysis. 
As $\varepsilon$ increases, the difference between any noisy graph constructed by randomized responses and the input graph becomes smaller. 
Another pattern is that the multiple-round algorithms (\advss and \advds) significantly outperform \naive and \bs, with mean absolute errors up to four orders of magnitude lower. 
{This is because the expected L2 losses of \naive and \bs are $O(n_1^2)$ and $O(n_1)$, respectively, while the expected L2 losses of \advss and \advds only depend on vertex degrees.} 
\advds consistently outperforms \advss on varying values of $\varepsilon$ because \advds integrates both single-source estimators and employs privacy budget allocation optimizations for minimized L2 loss. 
We also observe that \bs consistently outperforms \naive as the privacy budget increases. 
{
As expected, \cdp produces smaller mean absolute errors than algorithms under edge LDP, which has stronger privacy guarantees. 
}


\noindent
\textbf{Evaluate the effect of privacy budget allocation optimization on \advds. }
In Fig.~\ref{Fig.find}, we present the mean absolute errors of \advdsbasic in four datasets, as $\varepsilon_1$ ranges from $0.1 \varepsilon$ to $0.7 \varepsilon$, where $\varepsilon = 2$  and $\varepsilon_2 = \varepsilon - \varepsilon_1$. 
Note that \advdsbasic does not employ the privacy budget allocation optimization. 
In contrast, \advds adjusts $\varepsilon_1$ and the contribution of two single-source estimators (measured by $\alpha$) based on the query vertices. 
We use red dashed horizontal lines to indicate the mean absolute errors associated with \advds. 
First, the optimal budget allocation plan varies across datasets and it is unrealistic to fix $\varepsilon_1$ and $\varepsilon_2$ for all datasets. 
This is because optimal budget allocation depends on the degrees of the query vertices, as shown in Table \ref{tab:complexitycompare}. 
Also, for each dataset, the mean absolute error with \advds is close to or even smaller than the smallest mean absolute error of \advdsbasic on varying values of $\varepsilon_1$. 
This implies that \advds can find $\varepsilon_1$ and $\alpha$ that result in near-optimal L2 loss. 


\noindent
\textbf{Evaluate the effectiveness of \advds on vertex pairs with imbalanced degrees.} 
In Fig.~\ref{fig.balance}, we report the mean absolute errors of \advss, \advdsbasic, and \advds across four datasets, as $\kappa$ ranges from $10^0$ to $10^3$, with $\varepsilon = 2$. 
Here, $\kappa$ quantifies the imbalance between two vertex degrees. 
For \advdsbasic, $\varepsilon_1$ is set to $0.5 \varepsilon$. 
We observe that the mean absolute errors of \advss and \advdsbasic increase as $\kappa$ increases, while the performance of \advds remains relatively unchanged. 
This is because \advss only relies on one query vertex to construct the unbiased estimator $\fq$. 
Thus, if $deg(u,G)$ is large, the error increases accordingly, as indicated in Table \ref{tab:complexitycompare}. 
For \advdsbasic, it performs slightly better than \advss when $\kappa$ is small because it allows $\fq$ and $\fx$ to contribute equally (i.e., $(\fq + \fx)/2$). 
However, when the vertex degrees are highly imbalanced ($\kappa$ becomes large), the errors of \advdsbasic escalate rapidly. 
Also, neither \advss nor \advdsbasic can adjust privacy budget allocations based on the query vertices. 
In contrast, \advds uses $\alpha$ to model the contribution of two query vertices and dynamically adjust privacy budgets to minimize L2 loss. 
(1) If the vertex degrees are large, \advds allocates more privacy budget to $\varepsilon_1$. 
(2) If the vertex degrees are imbalanced, \advds adjusts $\alpha$ so that the query vertex with a smaller degree has a greater contribution. 

\noindent
\textbf{Evaluate the communication costs of all algorithms.}
{
In Fig.~\ref{fig.cost}, we report the communication costs (in MB) of each algorithm averaged across 100 randomly sampled vertex pairs in four datasets, as $\varepsilon$ varies. 
We observe that \naive and \bs require approximately the same message sizes. 
This is because \naive and \bs rely solely on randomized responses to satisfy edge LDP. Given a fixed $\varepsilon$, they apply randomized responses with the same flipping probability, resulting in the same expected number of noisy edges.
Also, \advss and \advds incur higher communication costs than \naive and \bs, 
which are incurred by 
(1) uploading the noisy edges to the data curator
(2) downloading the noisy edges to the query vertices 
(3) sending the estimators ($\fq$ or $\fx$) from the query vertices. 
For \advds, the communication costs are higher as it utilizes the noisy edges from both query vertices and also needs to send vertex degree estimated to the data curator. 
Note that the highest average communication cost for \advds across datasets is approximately $100$ MB, which is modest in practical terms. 
}

\noindent
\textbf{Evaluate the effect of the number of vertices.}
{
In Fig.~\ref{fig.scale}, we report the mean absolute errors of \cdp, \naive, \bs, \advss and \advds in four datasets as the number of vertices increases. 
Specifically, on each dataset, we uniformly sample $20\%$, $40\%$, $60\%$, $80\%$, and $100\%$ of all vertices and run the algorithms on the subgraphs induced by the sampled vertices. The privacy budget $\varepsilon$ is fixed at $2$. 
First, we observe that the performances of \cdp, \advss, and \advds remain relatively unchanged. 
This aligns with our analysis for \advss and \advds, where their L2 losses depend solely on the allocation of the privacy budget, the degrees of the query vertices, and $\alpha$, the weighting parameter adjusting the contribution of $\fq$ and $\fx$. 
For \cdp, its errors come only from the added Laplacian noise, which is also not related to the number of vertices in the bipartite graph. 
We also observe that the mean absolute errors of \naive and \bs increase steadily as the number of vertices increases. 
For instance, on \texttt{Dui} and \texttt{OG}, the mean absolute errors of \naive increase approximately fivefold when the number of vertices increases from $20\% \times |V|$ to $100\% \times  |V|$. 
Meanwhile, the mean absolute errors of \bs show less sensitivity, increasing approximately 2.3 times. 
This is because the expected L2 losses of \naive and \bs are $O(n_1^2)$ and $O(n_1)$, respectively. }

\section{Related Work}
\label{sec:related}
Here we review the related works on graph analysis under differential privacy. 

\noindent
{\bf Graph analysis under differential privacy.}
Differential privacy is widely adopted for privacy-preserving graph analysis, including releasing degree distributions {\cite{hay2009boosting, hay2009accurate, day2016publishing, macwan2018node}}, 
{common neighbor count distribution \cite{lv2024publishing}}, 
$k$-star counting \cite{nissim2007smooth, karwa2011private}, {triangle counting \cite{ ding2018privacy, imola2021locally,lv2021publishing, imola2022communication, imola2022differentially}}, and core decomposition \cite{dhulipala2022differential}, and {graph learning \cite{sajadmanesh2021locally, lin2022towards, wu2022linkteller, zhu2023blink, wei2024poincare}}. 
Some approaches adopt {\em central differential privacy} \cite{nissim2007smooth, hay2009boosting, karwa2011private, zhang2015private, lv2024publishing}, where a trusted curator can access the entire graph. However, if this curator is compromised, all users' privacy is at risk \cite{imola2021locally, jiang2021applications}. 
{Another line of research adopts {\em local differential privacy} \cite{qin_generating_2017, wei2020asgldp, imola2021locally, imola2022communication, eden2023triangle,liu2022collecting,  sun2019analyzing, imola2022differentially}. }
Two main paradigms exist for graph analysis under LDP: 
(1) {general-purpose synthetic graph construction \cite{qin2017generating, zhang2018two, gao2018local, ju2019generating, liu2020privag, ye2020lf,hou2023ppdu}} and 
(2) problem-specific algorithmic design. 
The former often suffers from low data utility due to the loss of graph structure. 
Under the second category, many works are devoted to motif counting. 
\cite{imola2021locally} introduces one-round and two-round algorithms for triangle counting under edge LDP, while \cite{imola2022communication} improves communication cost and estimation error. 
\cite{eden2023triangle} offers an in-depth technical analysis of these algorithms. 
\cite{liu2022collecting} and \cite{sun2019analyzing}  study triangle counting in the localized setting with extended local views. 
{\cite{liu2023cargo} attempts to improve data utility for triangle counting under edge LDP in a crypto-assisted manner.} 
\cite{sun2019analyzing} also addresses three-hop paths and k-cliques on small $k$ values. 
\cite{imola2022differentially} estimates the 4-cycle and triangle counts under the shuffle model, where users' messages are shuffled before being sent to the data curator. 
\cite{sun2024k} proposes k-star LDP to addresses differentially private $(p,q)$-biclique counting over bipartite graphs. Specifically, each vertex reports its perturbed k-star neighbor lists instead of the classic edge neighbor lists to the data curator. 
In addition, \cite{dhulipala2022differential} studies core decomposition under edge LDP, leading to approximate solutions for densest subgraph discovery. 
\cite{dinitz2023improved} further proves a purely additive loss for the densest subgraph problem under edge LDP. 
{A recent work \cite{lv2024publishing} studies publishing the histogram of common neighbor counts under the centralized model, which differs from our setting. }

\section{Conclusion}
\label{sec:conclusion}
In this paper, we study the problem of common neighborhood estimation on bipartite graphs under edge LDP. 
To address overcounting with the \naive approach, we propose the \bs algorithm that leverages the flipping probability to construct unbiased estimates. 
To improve data utility, we propose a multiple-round framework and a single-source algorithm \advss, which enables the query vertices to download noisy edges and construct unbiased estimators locally. 
This significantly reduces error compared to \bs by limiting the candidate pool. 
To tackle cross-round privacy budget allocation and the variety of query vertices, 
we propose the \advds algorithm that returns a weighted average of two unbiased estimators associated with two query vertices. 
We propose novel optimizations to adjust the privacy budgets of each round and the contribution of each estimator based on the query vertices. 
Experiments on $15$ real-world bipartite graphs validate the effectiveness and efficiency of the proposed multiple round algorithms.



\bibliographystyle{ACM-Reference-Format}
\bibliography{sample}

\end{document}